\theoremstyle{definition}
\newtheorem{definition}{Definition}
\theoremstyle{definition}
\theoremstyle{definition}
\theoremstyle{definition}
\newtheorem{proposition}{Proposition}
\theoremstyle{definition}
\newtheorem{observation}{Observation}
\renewcommand\H{\mathcal{H}}
\newcommand{\1}{\mathds{1}}
\newcommand{\defeq}{\coloneqq}
\newcommand{\rdefeq}{\eqqcolon}
\newcommand{\ot}{\otimes}
\newcommand{\op}{\oplus}
\newcommand{\bra}[1]{\langle #1 \rvert}
\newcommand{\ket}[1]{\lvert #1\rangle}
\newcommand{\ketbra}[2]{\lvert {#1}\rangle \langle {#2}\rvert}
\newcommand{\proj}[1]{\ketbra{#1}{#1}}
\newcommand{\modu}[1]{\lvert #1 \rvert}
\DeclareMathOperator{\tr}{tr}
\DeclareMathOperator{\im}{im}
\DeclareMathOperator{\rank}{rank}
\newcommand{\ncl}{\nonumber\\}
\newcommand*{\mtiny}{\scriptscriptstyle}
\newcommand*{\inv}{^{-1}}
\newcommand*{\ad}{^{\dagger}}
\newcommand{\Rc}{\mathcal R}
\newcommand{\Nc}{\mathcal N}
\newcommand{\Sc}{\mathcal S}
\newcommand{\Qc}{\mathcal Q}
\newcommand{\mbi}{{\operatorname{MBI}}}
\newcommand{\mpi}{{\operatorname{MPI}}}
\newcommand{\bic}{{\operatorname{BI}}}
\newcommand{\pic}{{\operatorname{PI}}}
\newcommand{\E}{\textbf{E}}
\newcommand{\F}{\textbf{F}}
\renewcommand{\P}{\textbf{P}}
\newcommand{\Ec}{\mathcal E}
\newcommand{\Ect}{{\mtiny\Ec}}
\newcommand{\Ic}{\mathcal{I}}
\newcommand{\Mc}{\mathcal{M}}
\newcommand{\Uc}{\mathcal{U}}
\newcommand{\Tc}{\mathcal T}
\newcommand{\id}{\operatorname{id}}
\newcommand{\rel}{\operatorname{rel}}
\newcommand{\rob}{\operatorname{rob}}
\newcommand{\geo}{\operatorname{geo}}
\newcommand{\He}{\H_\Ect}
\newcommand{\Pie}{\Pi_\Ect}
\newcommand{\trn}[1]{\lvert\lvert #1 \rvert\rvert_1}
\begin{document}
\title{Quantifying coherence with respect to general quantum measurements}
\author{Felix Bischof}
\email{felix.bischof@hhu.de}
\author{Hermann Kampermann}\author{Dagmar Bru\ss}
\affiliation{Institut f\"ur Theoretische Physik III, Heinrich-Heine-Universit\"at D\"usseldorf,
Universit\"atsstra\ss e 1, D-40225 D\"usseldorf, Germany}
\date{\today}

\begin{abstract}
Coherence is a cornerstone of quantum theory and a prerequisite for the advantage of quantum technologies. In recent work, the notion of coherence with respect to a general quantum measurement (POVM) was introduced and embedded into a resource-theoretic framework that generalizes the standard resource theory of coherence. In particular, POVM-incoherent (free) states and operations were established. In this work,
we explore features of this framework which arise due to the rich structure of POVMs compared to projective measurements. Moreover, we introduce a rigorous, probabilisitic framework for POVM-based coherence measures and free operations. This leads to the introduction of new, strongly monotonic resource measures that neatly generalize well-known standard coherence measures. Finally, we show that the relative entropy of POVM-coherence is equal to the cryptographic randomness gain, providing an important operational meaning to the concept of coherence with respect to a general measurement.
\end{abstract}
\maketitle

\section{Introduction}
In quantum technologies, particular properties of quantum states and channels become valuable resources for the application. For example, quantum entanglement enables superior performance in nonlocal games compared to classical resources, which can be utilized for the device-independent distribution of a secret key~\cite{acin2007device,arnon2018practical}. Quantum resource theories (QRTs)~\cite{brandao2015reversible,liu2017resource,chitambar2018quantum} provide a versatile, application-independent methodology for the quantitative analysis of resources. The QRT framework has been applied to different quantum phenomena such as entanglement~\cite{horodecki2003local,horodecki2009quantum}, purity~\cite{horodecki2003reversible}, asymmetry~\cite{marvian2013theory,marvian2014extending}, thermodynamics~\cite{brandao2013resource} and coherence~\cite{baumgratz2014quantifying,winter2016operational,streltsov2016quantum}.
In recent years, the core common structure of QRTs has been identified~\cite{coecke2016mathematical,horodecki2013quantumness}. In physical setups, the feasible quantum operations are usually constrained, either due to practical limitations or  fundamental physical laws such as energy conservation. Consequently, only a subclass of operations can be (easily) realized, which are called free operations. Properties of quantum states that cannot be created by free operations are considered a resource. States without resource content are called free states. Building on these basic notions, it is possible to develop a rigorous quantitative framework which yields insights into the different means of quantifying a resource, the optimal distillation and dilution of the resource and the possibility of interconversion of resource states under the given constraints.

Quantum coherence~\cite{streltsov2016quantum}, i.e., the feature of quantum systems to be in a superposition of different states is at the core of quantum mechanics. In particular, coherence underlies quantum entanglement~\cite{streltsov2015measuring} which plays a central role in quantum communication and computing. The resource theory of coherence is formulated with respect to a distinguished basis of a Hilbert space, the incoherent basis $\{\ket{i}\}$, which defines free states as the states that are diagonal in this basis. For instance, in quantum thermodynamics $\{\ket{i}\}$ is the energy eigenbasis and work can be extracted by a thermal process which removes the off-diagonal entries of the state of the system~\cite{kwon2018clock}. Equivalently, coherence can be defined with respect to the von Neumann measurement $\P=\{\proj{i}\}$ such that free states arise as post-measurement states of $\P$.

However, coherence as an intrinsic property of quantum states should be defined with respect to the most general quantum measurements, namely, positive-operator-valued measures (POVMs). This is because POVMs describe the most general type of quantum observable and can have a real operational advantage compared to any projective measurement, see e.g.~\cite{oszmaniec2017simulating}.
A notion of \emph{coherence with respect to a general measurement} is meaningful if i) it can be embedded in a consistent resource theory ii) POVM-based coherence measures have interesting operational interpretations, i.e, they quantify the advantage of states in a quantum information protocol.
Recently, a resource theory of quantum state coherence with respect to an arbitrary POVM was introduced and studied~\cite{bischof2018resource}. Here, we develop this framework further by discussing selected features that are distinct from standard coherence theory. In particular, we answer point ii) by providing an important operational interpretation of the most fundamental POVM-coherence measure. Moreover, we introduce further operational restrictions on the class of free operations in conjunction with new useful measures of POVM-coherence. We expect that our findings will help to clarify the role of coherence in all quantum technologies employing nonprojective measurements.

The structure of our work is as follows. In Sec.~\ref{sec:pbcoh} we briefly recapitulate the resource theory of POVM-based coherence~\cite{bischof2018resource}. Sec.~\ref{sec:mincoh} discusses a particular one-parameter POVM, which describes how standard coherence turns into POVM-based coherence, highlighting features of minimally coherent states and the measurement map. In Sec.~\ref{sec:cohran}, we show that the relative entropy of POVM-based coherence quantifies the cryptographic randomness of the measurement outcomes in relation to an eavesdropper who has side information about the measured state. This provides an operational interpretation of the resource theory. Subsequently, in Sec.~\ref{sec:selpbc}, we define and study free Kraus operators as well as selective free operations. Finally, in Sec.~\ref{sec:robcoh}, we introduce new, strongly monotonic POVM-coherence measures and find relations among them.

\subsection{Resource theory of block coherence}
\label{sec:bbcoh}

The resource theory of POVM-based coherence is derived from the framework of block coherence\footnote{{In \r{A}berg's work block coherence is called superposition. However, since block coherence is a generalization of coherence with very similar structure, we find this name more suitable from the current literature perspective.}}, introduced by \r{A}berg~\cite{aberg2006quantifying}. In the latter resource theory, the Hilbert space $\H=\op_i\pi_i$ is partitioned into orthogonal subspaces $\pi_i$. If we denote the projector on the $i$-th subspace by $P_i$, the set $\P=\{P_i\}$ constitutes a projective measurement on $\H$. Block-incoherent (BI, free) states are defined as states of the form
\begin{align}
\rho_{\mtiny\bic} &= \Delta[\sigma], \quad\sigma\in\Sc \label{bic}, \\
\Delta[\sigma]& = \sum_iP_i\sigma P_i \label{bdeph},
\end{align}
where $\Sc$ is the set of quantum states and $\Delta$ denotes the block-dephasing operation, which sets all entries except the blocks on the diagonal to zero. In other words, block-incoherent states do not possess ``outer'' coherence across the subspaces $\pi_i$. Note that the convex set of block-incoherent states $\Ic$ is equal to the set of U(1)-symmetric states in the resource theory of asymmetry with the symmetry group $\{U(\theta)=e^{-i\theta\sum_kkP_k}\}$~\cite{piani2016robustness}. 
A further ingredient of the resource theory are maximally block-incoherent (MBI) operations $\Lambda_\mbi$. These are channels (i.e., completely positive trace-preserving maps) that preserve the set of block-incoherent states\footnote{In the resource theory of asymmetry, the free operations usually considered in the literature~\cite{marvian2013theory,marvian2014extending,gour2009measuring,marvian2016quantum} are the group-covariant operations, i.e., channels that commute with all unitary channels obtained from the symmetry group. In the language of coherence theory, these operations are the translationally-invariant operations~\cite{marvian2016quantum}, which form a strict subset of the maximal set of free operations MBI we consider here~\cite{marvian2016quantify}.}, that is, $\Lambda_\mbi[\Ic]\subseteq\Ic$.
Finally, the block-coherence content of states can be quantified by suitable measures~\cite{aberg2006quantifying}. The standard example for a measure is the relative entropy of block coherence, which has the form
\begin{align}\label{rebc}
C_{\rel}(\rho,\P)=S(\Delta[\rho])-S(\rho),
\end{align}
where $S$ denotes the von Neumann entropy $S(\rho)=-\tr(\rho\log_2\rho)$. The quantity $C_{\rel}$ satisfies the following properties which we view as minimal requirements for a block-coherence measure~\cite{bischof2018resource}:
\begin{enumerate}
\label{bcmeasp}
\item[(B1)] \emph{Faithfulness:} $C(\rho,\P)\geq0$ with equality iff $\rho=\rho_{\mtiny\bic}$.
\item[(B2)] \emph{Monotonicity:} $C(\Lambda_\mbi[\rho],\P)\leq C(\rho,\P)$ for any MBI map. 
\item[(B3)] \emph{Convexity:} $C(\sum_ip_i\rho_i,\P)\leq\sum_ip_iC(\rho_i,\P)$ for all states $\{\rho_i\}$, and probabilities $p_i\geq0$, $\sum_ip_i=1$.
\end{enumerate}
Note that the concepts explained so far coincide with their counterparts in the standard resource theory of coherence if all $P_i$ have rank one.

\subsection{Resource theory of coherence based on POVMs}
\label{sec:pbcoh}
A much broader generalization of standard coherence is provided by the POVM-based resource theory of coherence~\cite{bischof2018resource}. POVMs describe the most general type of quantum measurement, namely a collection of $n$ positive operators $\E=\{E_i\geq0\}_{i=1}^n$ that sum to the identity, $\sum_iE_i=\1$. We will also use the corresponding measurement operators, defined as $A_i=U_i\sqrt{E_i}$. Here, $\sqrt{E_i}$ denotes the unique positive square root of $E_i$ and $U_i$ is an arbitrary unitary. Thus,
$A_i\ad A_i=E_i$ holds.

Let $\E$ be a POVM on a $d$-dimensional Hilbert space $\H$. The main idea to define POVM-based coherence theory is to link it to the resource theory of block coherence specified by the \emph{Naimark extension} $\P$ of $\E$. The Naimark extension is a projective measurement with the following property: if the POVM is embedded into a subspace of a higher-dimensional Hilbert space $\H'$ of suitable dimension $d'\geq d$, $\P$ extends $\E$ to the whole space. We denote by $\Ec$ an (isometric) embedding channel, mapping operators on $\H$ to operators on $\H'$. Consequently, it holds that
\begin{align}
\tr(E_i\rho) = \tr(P_i\Ec[\rho]),\quad\textrm{for all $\rho\in\Sc$,}
\end{align}
that is, $\P$ has the same expectation values for any embedded state $\Ec[\rho]$ as $\E$ for $\rho$. Therefore, it is natural to define the coherence of a state $\rho$ w.r.t.~a POVM $\E$ as the block coherence of $\Ec[\rho]$ w.r.t.~the Naimark extension $\P$ of $\E$, namely
\begin{align}\label{pbcmeas}
C(\rho,\E)\defeq C(\Ec[\rho],\P),
\end{align}
where the function $C$ on the right denotes any unitarily-covariant block-coherence measure~\cite{bischof2018resource}. Note that the Naimark extension of a POVM $\E$, in particular its dimension $d'$, is not unique\footnote{For instance, given any Naimark extension, one can always increase the dimension of each effect by adding projections on additional degrees of freedom.}. Therefore, one should ensure that the right side of Eq.~\eqref{pbcmeas} does not depend on the choice of Naimark extension $\P$. This property was shown in~\cite{bischof2018resource} for the case of $C(\rho',\P)=C_{\rel}(\rho',\P)$ from Eq.~\eqref{rebc}. One obtains the relative entropy of POVM-based coherence
\begin{align}\label{repbc}
C_{\rel}(\rho,\E)=H(\{p_i(\rho)\})+\sum_ip_i(\rho)S(\rho_i)-S(\rho),
\end{align}
with $p_i(\rho)=\tr(E_i\rho)$, $\rho_i=A_i\rho A_i\ad/p_i$, $A_i=\sqrt{E_i}$, and the Shannon entropy $H(\{p_i(\rho)\})=-\sum_ip_i\log_2 p_i$. In the special case of $\E$ being a von Neumann measurement, $E_i=\proj{i}$, $C_{\rel}(\rho,\E)$ corresponds to the standard relative entropy of coherence. From Def.~\eqref{pbcmeas} it follows that for some POVMs the set of states with zero coherence (POVM-incoherent states $\rho_{\mtiny\pic}$) is empty~\cite{bischof2018resource}. The generalization of incoherent states are states with \emph{minimal} coherence $\rho_{\min}$, which form a set $\Mc$ that has similar properties as the standard incoherent set: it is nonempty, convex, and closed under POVM-incoherent operations, which are defined below.

POVM-incoherent (free) operations can be derived from block-incoherent operations on the enlarged space. Let $\Lambda'_\mbi$ be a block-incoherent map on states $\rho'\in\Sc'$ on the Naimark space with the additional property that the set of embedded states $\{\Ec[\rho]\in\Sc':\rho\in\Sc\}$ is closed under $\Lambda'_\mbi$. Then, the following channel is called a (maximally) POVM-incoherent operation (MPI)~\cite{bischof2018resource}
\begin{align}\label{picops}
\Lambda_\mpi[\rho]  = \Ec\inv\circ \Lambda'_\mbi \circ\Ec[\rho].
\end{align}
POVM-coherence measures and MPI maps are the main constituents of the resource theory of quantum state coherence based on POVMs. Crucially, these two concepts are consistent with each other by construction, as any POVM-based coherence measure~\eqref{pbcmeas} satisfies:
\begin{enumerate}
\item[(P1)] \emph{Faithfulness:} $C(\rho,\E)\geq0$ with equality iff $\rho=\rho_{\mtiny\pic}$.
\item[(P2)] \emph{Monotonicity:} $C(\Lambda_\mpi[\rho],\E)\leq C(\rho,\E)$ for any MPI map with respect to~$\E$. 
\item[(P3)] \emph{Convexity:} $C(\rho,\E)$ is convex in $\rho$.
\end{enumerate}
See Ref.~\cite{bischof2018resource} for a detailed discussion of the concepts. The question whether POVM-coherence measures satisfy \emph{strong} monotonicity is an open problem that will be addressed and answered in Sec~\ref{sec:selpbc}.

\section{Minimally coherent states and the measurement map}
\label{sec:mincoh}

In this section, we examine a one-parameter POVM to illustrate how standard coherence theory turns into POVM-based coherence. Moreover, this example sheds light on two natural questions in the context of the generalized notion of coherence: i) does the maximally mixed state always contain the lowest amount of coherence? ii) is the measurement map $\Lambda_\E[\rho]=\sum_i\sqrt{E_i}\rho \sqrt{E_i}$ POVM-incoherent for any POVM? In standard coherence theory, both questions can be answered in the affirmative. However, our example shows that this does not hold in general.

To illustrate the amount of POVM-based coherence in states, we discuss a POVM representing the continuous distortion from a von Neumann measurement into a non-projective POVM. Concretely, we consider $\E(\delta)=\{E_i(\delta)\}_{i=1}^3$ which coincides for $\delta=0$ with the qubit $Y$-measurement, and for $\delta=1$ with the qubit trine POVM, whose measurement directions $\vec m_i$ form an equilateral triangle on the xy-plane of the Bloch sphere. With the Bloch representation of qubit POVMs
\begin{align}
E_i&=\alpha_i(\1+\vec m_i\cdot\vec\sigma)\quad\textrm{with}\quad \alpha_i\geq0 \ncl
\sum_i\alpha_i&=1,\quad \sum_i\alpha_i\vec m_i=0,
\end{align}
the POVM elements $E_i(\delta)$ are given by the parameters
\begin{align}\label{aux5}
&\alpha_1=\frac\delta3,\quad \alpha_2=\alpha_3=\frac12(1-\frac\delta3) \ncl
&\vec m_1 = (1,0,0)^T\quad\textrm{and with}\quad t\defeq\frac\delta{3-\delta} \ncl
&\vec m_2 = (-t,\sqrt{1-t^2},0)^T\ncl
&\vec m_3 = (-t,-\sqrt{1-t^2},0)^T.
\end{align}
The effects $E_i(\delta)$ are linearly independent (except for $\delta=0$) as the measurement directions form a triangle~\cite{d2005classical}. Moreover, since $\modu{\vec m_i}=1$, the effects have rank one, except for $\delta=0$ where the first effect has rank zero. Thus, $\E(\delta)$ is an \emph{extremal} POVM for any $\delta$, i.e., it cannot be written as a mixture of two other POVMs, and in this sense does not contain classical noise. 
In Fig.~\ref{disty}, we plot the POVM-based coherence of selected states, as well as the minimally and maximally achievable coherence for all values of $\delta$. Interestingly, the figure shows that for $0<\delta<1$, the state with minimal coherence is distinct from the maximally mixed state. We abstain from stating the explicit form of $\rho_{\min}(\delta)$ in the range $0<\delta<1$ as it is too cumbersome. However, we report that in this interval the maximal eigenvalue takes values $0.5<\lvert\lvert\rho_{\min}(\delta)\rvert\rvert_\infty \lesssim0.6$. 

\begin{figure}[!ht]
\centering
\includegraphics[width=0.85\columnwidth]{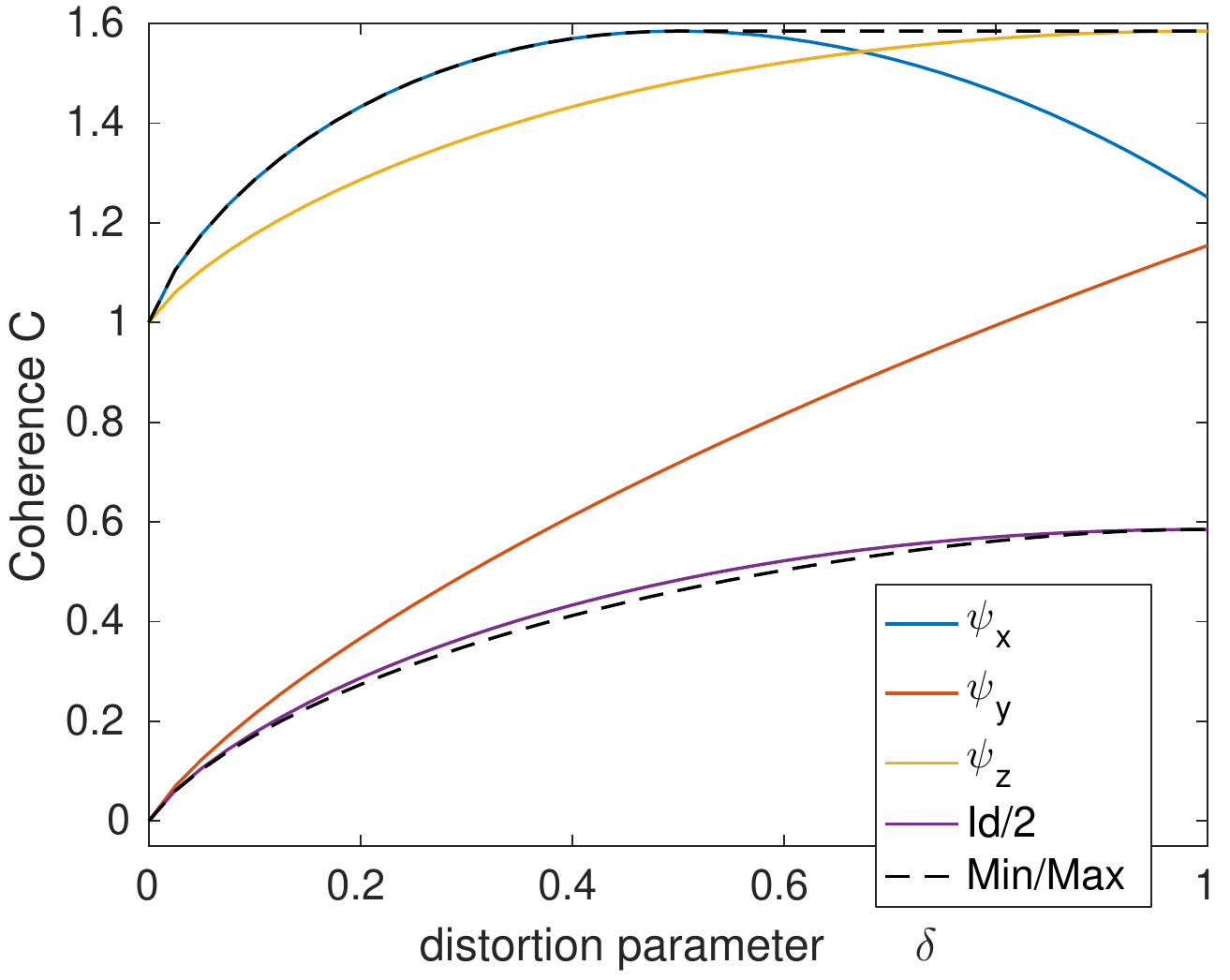} 
\caption{The relative entropy of POVM-based coherence plotted for selected states with respect to the POVM $\E(\delta)$ defined in Eq.~(\ref{aux5}) for all values of the distortion parameter $\delta$. The states $\psi_{x},\psi_{y},\psi_{z}$ denote the $+1$-eigenstates of the Pauli matrices $\sigma_{x}, \sigma_{y}, \sigma_{z}$, respectively. The lowest solid line corresponds to the maximally mixed state.
The dashed lines indicate the achievable minimal and maximal coherence, respectively, which were obtained analytically (by Karush-Kuhn-Tucker conditions~\cite{bischof2018resource}).
\label{disty}}
\end{figure}

This property can be utilized to show that the measurement map of the POVM $\E$, defined as
\begin{align}\label{povmmap}
\Lambda_\E [\rho] = \sum_i \sqrt{E_i}\rho\sqrt{E_i},
\end{align}
which is unital, $\Lambda_\E [\1]=\1$, is not incoherent in general. A counterexample is provided by the POVM $\E(\delta)$: Table~\ref{tab:disty} shows for selected parameters of $\delta$ that $\Lambda_\E$ increases the coherence of $\rho_{\min}$ for $0<\delta<1$. However, note that $\Lambda_\E$ from Eq.~\eqref{povmmap} is POVM-incoherent for any projective measurement but also for certain nonprojective measurements like the qubit trine POVM~\cite{bischof2018resource}.

\begin{table}[!h]
  \centering
\begin{tabular}{||c|c|c|c||} 
 \hline
 $\delta$ & $C_{\rel}(\rho_{\min})$ 
 & $C_{\rel}(\Lambda_\E[\rho_{\min}])$ & $C_{\rel}(\1/2)$ \\ [0.5ex] 
 \hline\hline
 0 & 0 & 0 & 0 \\ \hline
 0.4 & 0.412 & 0.427 & 0.433 \\ \hline
 0.5 & 0.462 & 0.476 & 0.483 \\ \hline
 0.6 & 0.503 & 0.514 & 0.522 \\ \hline
 1 & 0.585 & 0.585 & 0.585 \\ [0.1ex] 
 \hline
\end{tabular}
\caption{POVM-based coherence of states w.r.t.~$\E(\delta)$ for selected values of $\delta$. For $\delta\in\{0,1\}$, the maximally mixed state $\1/2$ is a state $\rho_{\min}$ of minimal coherence. Moreover, the measurement map $\Lambda_\E$ is incoherent in these cases and thus does not increase the coherence of $\rho_{\min}$. 
For $0<\delta<1$,  the maximally mixed state $\1/2$ does not have minimal coherence and $\Lambda_\E$ increases the coherence of $\rho_{\min}$.
}
\label{tab:disty}
\end{table}

\section{POVM-based coherence and private randomness}
\label{sec:cohran}

In Ref.~\cite{bischof2018resource}, the relative entropy of POVM-based coherence $C_{\rel}(\rho,\E)$ from Eq.~\eqref{repbc} was established as a measure of coherence with respect to general measurements. However, in the previous work the operational meaning of this measure was left open. In this section, we show that $C_{\rel}(\rho,\E)$ quantifies the private randomness generated by the POVM $\E$ on the state $\rho$ with respect to an eavesdropper holding optimal side information about the measured state. 
This is a relevant result for quantum randomness generation and cryptography, which generalizes the findings from Refs.~\cite{yuan2015intrinsic,yuan2016interplay}, where it was shown that the standard relative entropy of coherence corresponds to the quantum randomness of a von Neumann measurement.

We consider a POVM $\F=\{F_i\}$ that is measured on a state $\rho_A$ on a quantum system $A$, such that the measurement outcomes $i$ are stored in the register $X$, see Fig.~\ref{fig:randomcoh}. An eavesdropper holds maximal side information about $\rho_A$, i.e., all degrees of freedom correlated with $A$ in the form of a purifying system $E$ such that $\ket{\psi}_{AE}$ with $\rho_A=\tr_E(\proj{\psi}_{AE})$ describes the joint pure state. After the measurement $\F$, the joint state is given by
\begin{align}\label{pstate}
\tilde\rho_{XAE} = \sum_ip_i\proj{i}_X\ot\proj{\tilde\psi_i}_{AE},
\end{align}
where $p_i=\tr(F_i\rho_A)$ denotes the probability to obtain outcome $i$. The pure post-measurement states $\ket{\tilde\psi_i}_{AE}=\frac1{\sqrt{p_i}}(A_i\ot\1)\ket{\psi}_{AE}$ are defined by the measurement operators $A_i$ that implement the POVM, that is, $F_i=A_i\ad A_i$.

Let $S(X|E)_\rho=S(\rho_{XE})-S(\rho_E)$ denote the conditional von Neumann entropy 
of $X$ given $E$ on the state $\rho$. We define the \emph{randomness} contained in the random variable $X=(i,p_i)$ of the measurement outcomes of $\F$ as
\begin{align}\label{rxe}
R_{X|E}(\rho_A)= \min_{\ket{\psi}_{AE}}S(X|E)_{\tilde\rho},
\end{align}
where $\tilde\rho=\tilde\rho_{XE}$ is obtained from Eq.~\eqref{pstate} by tracing out $A$ and the minimum is taken over all purifications $\ket{\psi}_{AE}$ of $\rho_A$.
This choice of randomness quantification is relevant in practice, as it describes the asymptotic private randomness, i.e., unpredictability of the measurement outcomes. Indeed, for an eavesdropper employing an independent and identically distributed (IID) attack in an $n$-round protocol, the single-round von Neumann entropy is related by the quantum asymptotic equipartition property~\cite{tomamichel2009fully} to the smooth quantum min-entropy $H_{\min}^\varepsilon(X^n|E^n)$ of all $n$ rounds. The latter quantity has been proven to quantify composable security in quantum randomness generation and cryptography. More precisely, $H_{\min}^\varepsilon(X^n|E^n)$ is equal to the minimal number of bits needed to reconstruct $X^n$ from $E^n$, except with probability of order $\varepsilon$~\cite{renner2008security,arnon2018practical}.

\begin{figure}[!ht]
\centering
\includegraphics[width=0.6\columnwidth,valign=T]{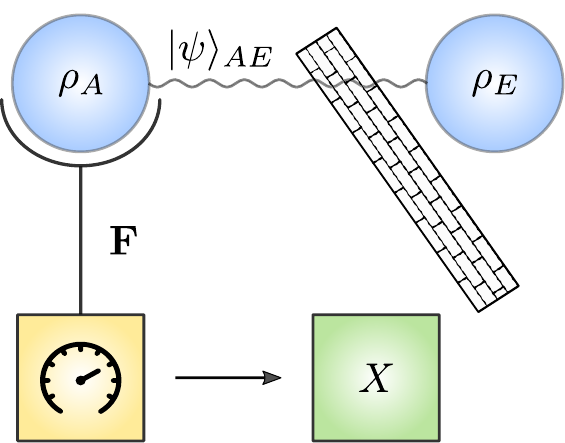}
\caption{\label{fig:randomcoh}
The relation between private randomness and POVM-based coherence. The eavesdropper Eve has maximal side information about the state $\rho_A$, namely a purification $\ket{\psi}_{AE}$. Nonetheless, if $\rho_A$ possesses coherence with respect to the POVM $\F$, the measurement outcomes $X=i$ contain secrecy with respect to Eve. That is, the asymptotic randomness generation rate is given by $R_{X|E}(\rho_A)=C_{\rel}(\rho_A,\F)$, with the relative entropy of POVM-based coherence defined in Eq.~\eqref{repbc}.
}
\end{figure}

\begin{proposition}
Let Eve hold a purification of $\rho_A$. The private randomness generation rate is equal to the relative entropy of POVM-based coherence, $R_{X|E}(\rho_A)=C_{\rel}(\rho_A,\F)$, for any possible POVM $\F$ measured on $\rho_A$ generating the outcome random variable $X$. 
\end{proposition}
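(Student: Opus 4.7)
The plan is to verify that the conditional entropy $S(X|E)_{\tilde\rho}$ takes the \emph{same} value, namely $C_{\rel}(\rho_A,\F)$, for every purification $\ket{\psi}_{AE}$ of $\rho_A$, so that the minimum in Eq.~\eqref{rxe} becomes trivial and reduces to a direct computation.

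First I would fix an arbitrary purification $\ket{\psi}_{AE}$ and a choice of measurement operators $A_i$ satisfying $A_i\ad A_i=F_i$. Tracing out $A$ from the state in Eq.~\eqref{pstate} produces the classical-quantum state $\tilde\rho_{XE}=\sum_i p_i\proj{i}_X\otimes\tilde\rho_E^{(i)}$, with $\tilde\rho_E^{(i)}=\tr_A\proj{\tilde\psi_i}_{AE}$. Because the $X$ register is classical, the joint entropy splits as $S(\tilde\rho_{XE})=H(\{p_i\})+\sum_i p_i\,S(\tilde\rho_E^{(i)})$.

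The crux is then two observations. (i) Eve's marginal is unperturbed by Alice's measurement: by cyclicity of the partial trace one has $\tilde\rho_E=\sum_i\tr_A[(A_i\ad A_i\otimes\1_E)\proj{\psi}_{AE}]$, and the POVM completeness relation $\sum_i A_i\ad A_i=\1_A$ collapses this to $\rho_E$; hence $S(\tilde\rho_E)=S(\rho_E)=S(\rho_A)$ by purity of $\ket{\psi}_{AE}$. (ii) Each post-measurement vector $\ket{\tilde\psi_i}_{AE}$ is pure, so its reductions on $A$ and $E$ are isospectral, giving $S(\tilde\rho_E^{(i)})=S(\tilde\rho_A^{(i)})=S(A_i\rho_A A_i\ad/p_i)$; unitary invariance of the von Neumann entropy absorbs the gauge freedom $A_i=U_i\sqrt{F_i}$ and identifies this with $S(\rho_i)$ as in Eq.~\eqref{repbc}. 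Substituting into $S(X|E)_{\tilde\rho}=S(\tilde\rho_{XE})-S(\tilde\rho_E)$ yields $H(\{p_i\})+\sum_i p_i\,S(\rho_i)-S(\rho_A)$, which is exactly $C_{\rel}(\rho_A,\F)$.

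To finish, I would observe that the right-hand side does not involve $\ket{\psi}_{AE}$, and any two purifications of $\rho_A$ are related by an isometry acting on $E$ that preserves all entropies above; the minimization in Eq.~\eqref{rxe} is therefore attained by every purification, establishing the proposition for arbitrary POVMs $\F$. The only step that will require genuine care is (i), where the POVM completeness relation together with the cyclic property of the partial trace is what ensures that Alice's measurement leaves Eve's side information statistically static; the rest of the argument is essentially bookkeeping combined with the standard pure-state identity $S(\rho_E)=S(\rho_A)$.
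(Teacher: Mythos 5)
Your proposal is correct and follows essentially the same route as the paper's proof: the joint entropy theorem for the classical--quantum state, the invariance of Eve's marginal under the local POVM, the purity-based identities $S(\rho_E)=S(\rho_A)$ and $S(\tilde\rho_{E|i})=S(\tilde\rho_{A|i})$, and the observation that the resulting expression is purification-independent so the minimization is trivial. Your explicit handling of the gauge freedom $A_i=U_i\sqrt{F_i}$ via unitary invariance of the entropy is a small but welcome addition that the paper leaves implicit.
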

\begin{proof} --
First, note that the local measurement $\F$ on $A$ leaves the state $\rho_E=\tr_A(\proj{\psi}_{AE})$ invariant, i.e., $\tilde\rho_E=\rho_E$. Moreover, it holds that $S(\rho_E) = S(\rho_A)$ since $\rho_{AE}=\proj{\psi}_{AE}$ is pure, and likewise $S(\tilde\rho_{A|i})=S(\tilde\rho_{E|i})$ since $\tilde\rho_{AE|i}=\proj{\tilde\psi_i}_{AE}$ is pure. 
This argument is a direct consequence of the Schmidt decomposition of pure states~\cite{nielsenchuang}.
Therefore, it holds that
\begin{align}
R_{X|E}(\rho_A) 
&= \min_{\ket{\psi}_{AE}} \bigl\{ S(\sum_ip_i\proj{i}_X\ot\tilde\rho_{E|i})-S(\tilde\rho_E) \bigr\} \ncl
&= \min_{\ket{\psi}_{AE}} \bigl\{ H(\{p_i\})+\sum_ip_iS(\tilde\rho_{E|i})-S(\rho_E) \bigr\} \ncl
&= H(\{p_i\})+\sum_ip_iS(\tilde\rho_{A|i})-S(\rho_A). 
\end{align}
In the first line, we inserted the state $\tilde\rho_{XE}$ from Eq.~\eqref{pstate} into Eq.~\eqref{rxe}.
In the second equation, we employed the joint entropy theorem~\cite{nielsenchuang}. 
The minimization can be dropped in the last step, as all quantities are independent of the choice of purification $\ket{\psi}_{AE}$. By inspecting Eq.~\eqref{repbc} we see that the expression in the last line is equal to $C_{\rel}(\rho_A,\F)$.
\end{proof}

This result explains why noisy POVMs typically lead to higher values of POVM-based coherence than projective measurements. The noise injects randomness into the outcomes $X$, which cannot be predicted by an eavesdropper with side information about the measured state. It is crucial that the eavesdropper does not have access to the measurement device, i.e., any noise in the measurement device is trusted. However, if the POVM $\E$ is extremal, the results of Ref.~\cite{ma15,bischof2017measurement,brask2017megahertz,ioannou2019upper} show that an eavesdropper cannot get additional knowledge about the measurement outcomes by pre-programming the measurement device. Extremal measurements such as the qubit trine POVM are thought to possess intrinsic quantum noise~\cite{d2005classical}, explaining why even the maximally mixed state can generate nonzero trusted randomness.
The POVM $\E(\delta)$ from Eq.~\eqref{aux5} is extremal for any $\delta\in[0,1]$. Thus, Fig.~\ref{disty} shows the generated private randomness $R_{X|E}(\rho)$ for selected states $\rho$ and the advantage of POVMs over projective measurements. In particular, for $\delta\geq\frac12$, $\E(\delta)$ yields up to $\log_2(3)\approx1.58$ private random bits per measurement, compared to maximally one bit for qubit projective measurements.

\section{Probabilistically free operations and strong monotonicity}
\label{sec:selpbc}

POVM-incoherent operations as defined in Eq.~\eqref{picops} form the set MPI, that is, the largest class of channels that cannot create POVM-based coherence. Thus, MPI generalizes the set of maximally-incoherent operations MIO~\cite{streltsov2016quantum}. However, in practice it is useful to also have a notion of \emph{selective} POVM-incoherent operations, which we introduce in this section. These operations cannot create coherence, not even probabilistically, when a particular outcome of the channel is selected. This stronger notion of incoherent operations was introduced in Ref.~\cite{baumgratz2014quantifying} for the standard resource theory of coherence under the name of incoherent operations (IO). It holds that incoherent operations are strictly included in the maximal set, IO $\subset$ MIO.

\subsection{Block-incoherent Kraus operators}
As a first building block, we need to introduce Kraus operators that cannot create block coherence. Let $\P$ be any projective measurement defining the Hilbert space partition $\H'=\op_i\pi_i$, where $\pi_i = \im P_i$. In Sec.~\ref{sec:pbcoh} we have introduced the block-dephasing operation $\Delta$ and block-incoherent states in Eq.~\eqref{bic}. Consequently, block-incoherent pure states are element of the set $\{\ket{\varphi_i}\}_i$, where $\ket{\varphi_i}$ denotes any normalized state vector such that 
\begin{align}\label{purebic}
\ket{\varphi_i}\in\im P_i.
\end{align}
Note that if $\dim P_i\geq2$, the above set is not finite as superpositions within $\im P_i$ are allowed.

Let $\{K_l'\}$ be a set of Kraus operators on $\H'$, that is, the operators satisfy the normalization condition $\sum_l(K_l')\ad K_l'=\1$. We call a Kraus operator block-incoherent if
\begin{align}\label{bickraus}
K'_l\ket{\varphi_i} \propto \ket{\varphi_j}
\end{align}
holds for all block-incoherent pure states $\ket{\varphi_i}$. Note that in analogy to the case in standard coherence theory~\cite{winter2016operational} block-incoherent Kraus operators have the form 
\begin{align}\label{aux10}
K_l' = \sum_{i}P_{f(i)}C_lP_{i},
\end{align}
where $f$ is some index function, which has to be chosen together with the complex matrix $C_l$ on $\H'$ such that normalization holds. We call a Kraus operator $K_l'$ \emph{strictly} block-incoherent, if $f$ is invertible, that is, an index permutation. In this case, also $(K_l')\ad$ is block-incoherent.

\subsection{POVM-incoherent Kraus operators}
\label{sec:pickraus}

Next, we construct Kraus operators that cannot create POVM-coherence in analogy to the construction of MPI operations~\eqref{picops}. 
We consider a POVM $\E$ on the $d$-dimensional space $\H$ and any Naimark extension $\P$ of it, defined on the $d'$-dimensional space $\H'$. The (Naimark) embedding of $\H$ into $\H'$ is given by $\H\op0\rdefeq\He$, which is a choice we make for the sake of concreteness without loss of generality. Define the operator 
\begin{align}\label{tmatrix}
T=\begin{pmatrix} \1 \\ 0 
\end{pmatrix},
\end{align}
where $0$ denotes the zero matrix of size $(d'-d)\times d$. Consequently, operators $X$ on $\H$ are transformed to Naimark space operators by the isometric channel $\Ec[X]=TX T\ad$. It holds that $T\ad T=\1$ and $TT\ad=\1\op0\rdefeq\Pie$.

Let $\{K_l'\}$ be a set of block-incoherent Kraus operators~\eqref{bickraus} on $\H'$, where any operator additionally satisfies
\begin{align}\label{spkraus}
K'_l\Pie = \Pie K'_l\Pie.
\end{align}
In other words, $K_l'$ maps the embedded original space $\H\op0$ to itself, which we call the subspace-preserving property. It is fulfilled if and only if all Kraus operators are of the form
\begin{align}\label{aux11}
K_l' = 
\left(\begin{array}{@{}c|c@{}}
* & * \\
\hline
0 & * \\
\end{array}\right),
\end{align}
where $0$ denotes the zero matrix of size $(d'-d)\times d$ and where $*$ represents matrices of suitable dimension.

\begin{definition}\label{def:pickraus}
We call the following operator on $\H$ a POVM-incoherent (PI) Kraus operator:
\begin{align}\label{pickraus}
K_l=T\ad K_l' T,
\end{align}
where $T$ is given in~\eqref{tmatrix} and $K_l'$ satisfies~\eqref{bickraus},~\eqref{spkraus} and normalization. 
\end{definition}
In Eq.~\eqref{pickraus}, the operators $T\ad$ and $T$ extract the upper left $d\times d$ block of the $d'\times d'$-matrix $K_l'$. One can readily check that a PI set $\{K_l\}$ satisfies normalization by construction.
At this point, we need to ensure that the above definition is not ambiguous.

\begin{proposition}\label{prop:pickraus}
The set containing all POVM-incoherent (PI) Kraus operators $K_l$ does not depend on the choice of Naimark extension used to define it, see Eq.~\eqref{pickraus}.
\end{proposition}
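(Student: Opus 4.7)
The plan is to verify that the set of PI Kraus operators is preserved under the two elementary transformations that relate arbitrary Naimark extensions of the same POVM $\E$: enlargement by an ancillary projective measurement, and application of a unitary that preserves the embedded subspace. For the first transformation, let $(\H',\P,T)$ be a Naimark extension and consider the enlarged extension $(\H'\op\H'',\{P_i\op Q_i\},\tilde T)$ with an arbitrary projective measurement $\{Q_i\}$ on an ancilla space $\H''$ and $\tilde T=\begin{pmatrix}T\\ 0\end{pmatrix}$. Given $K_l'$ satisfying~\eqref{bickraus} and~\eqref{spkraus} on $\H'$, one lifts it to $K_l'\op M_l$ with any block-incoherent Kraus family $\{M_l\}$ for $\{Q_i\}$ of matching normalization; both defining properties carry over block-wise, and $\tilde T\ad(K_l'\op M_l)\tilde T=T\ad K_l' T$. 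Conversely, by subspace-preservation~\eqref{spkraus} on the enlarged space, any $\tilde K_l'$ has block-triangular form whose upper-left block $K_l'$ inherits~\eqref{bickraus} and~\eqref{spkraus} with respect to $(\H',\P,T)$ and produces the same induced Kraus operator on $\H$.

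For the second transformation, suppose two Naimark extensions $(\H',\P^{(1)},T)$ and $(\H',\P^{(2)},T)$ share both the dilation space and the embedding. By the uniqueness-up-to-isometry statement of the Naimark dilation theorem, there exists a unitary $V$ on $\H'$ with $VT=T$ (hence $V\Pie=\Pie V$) and $VP_i^{(1)}V\ad=P_i^{(2)}$. Then $K_l'\mapsto VK_l'V\ad$ bijects the PI Kraus families for the two extensions: block-incoherence~\eqref{bickraus} is preserved because $V$ maps $\im P_i^{(1)}$ onto $\im P_i^{(2)}$; subspace-preservation~\eqref{spkraus} is preserved because $V$ commutes with $\Pie$; and $T\ad(VK_l'V\ad)T=T\ad K_l'T$, so the induced Kraus operator on $\H$ is unchanged.

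Any two Naimark extensions $(\H_1',\P^{(1)},T_1)$ and $(\H_2',\P^{(2)},T_2)$ of $\E$ can be brought into this framework by first applying the ancillary-extension step to move both to a common larger dilation space with the same embedding, and then applying the unitary-equivalence step to identify them. The main obstacle lies in this second step, specifically in guaranteeing the existence of an intertwining unitary $V$ that fixes $\He$ pointwise while conjugating $\P^{(1)}$ into $\P^{(2)}$. This follows by applying the uniqueness part of the minimal Naimark dilation to $\E$ and extending the resulting partial isometry on the orthogonal complement of $\He$ to a full unitary on $\H'$; once $V$ is in hand, the remaining verifications are direct consequences of block-incoherence and subspace-preservation, and the independence of the set of PI Kraus operators from the choice of Naimark extension follows.
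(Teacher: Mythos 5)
Your overall strategy --- reducing the comparison of two Naimark extensions to an ancillary enlargement followed by a unitary identification, and checking that the set of PI Kraus operators is invariant under each elementary move --- is a legitimate alternative to the paper's proof, which instead transports an entire Kraus family from one extension to the other in a single step using the explicit intertwiners $Q$, $U$ and the reversal channel $\Rc$ of Appendix~\ref{app:rne}. As written, however, your argument has a decisive gap exactly where you locate the ``main obstacle''. The claim that two extensions $(\H',\P^{(1)},T)$ and $(\H',\P^{(2)},T)$ sharing the dilation space and the embedding are always related by a unitary $V$ with $VT=T$ and $VP_i^{(1)}V\ad=P_i^{(2)}$ is false. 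Take $\H=\Cb^2$, $\E=\{\proj{0},\proj{1}\}$, $\H'=\Cb^3$ with $T$ the embedding onto the first two coordinates, and compare $P_1^{(1)}=\proj{0}+\proj{2}$, $P_2^{(1)}=\proj{1}$ with $P_1^{(2)}=\proj{0}$, $P_2^{(2)}=\proj{1}+\proj{2}$: both compress to $\E$, yet $\rank P_1^{(1)}\neq\rank P_1^{(2)}$, so no unitary conjugation can relate them. Naimark uniqueness supplies an intertwiner only between the \emph{minimal} parts $\lspan\{P_i^{(k)}Tv\}$; an arbitrary unitary extension to the orthogonal complement will not intertwine the projections there unless the rank profiles of the complementary blocks happen to coincide. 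To rescue the argument you must choose the ancillary measurements $\{Q_i\}$ in the enlargement step so that the two enlarged extensions have identical rank profiles $\rank\tilde P_i^{(1)}=\rank\tilde P_i^{(2)}$ for every $i$ (always arrangeable), and only then invoke unitary equivalence. This bookkeeping is precisely what the ordering $\rank\hat P_i\leq\rank P_i$ and the isometry $Q$ with $P_iQ=Q\hat P_i$ accomplish in the paper.

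There are two further, more minor, gaps in the enlargement step. First, the lift $K_l'\op M_l$ is \emph{not} block-incoherent with respect to $\{P_i\op Q_i\}$ for an arbitrary block-incoherent family $\{M_l\}$: a generic incoherent vector of block $i$ has the form $v\op w$ with $v\in\im P_i$ and $w\in\im Q_i$, and $(K_l'v)\op(M_lw)$ lies in a single block only if the index functions of $K_l'$ and $M_l$ agree at $i$. A compatible choice is needed, e.g.\ the family $\{K_l'\op 0\}_l\cup\{0\op Q_j\}_j$, which is normalized, block-incoherent, subspace-preserving, and compresses to $\{K_l\}\cup\{0\}$. Second, in the converse direction the compressed family is in general only sub-normalized on $\H'$ (it resolves the identity on $\He$ but not on $\H'\ominus\He$), so to exhibit the induced operators as members of a genuine PI family you must complete it with block-incoherent Kraus operators supported on $\H'\ominus\He$ (rank-one operators onto incoherent vectors, in the spirit of the paper's $L_{\hat a,b}$), which vanish on $\He$ and hence leave the induced set unchanged. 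All of these repairs are available, so the approach can be made to work, but the existence of the intertwining unitary is the heart of the proposition and cannot be obtained from ``uniqueness of the Naimark dilation'' alone.
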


The proof can be found in the Appendix~\ref{app:rne}. In the special case of a von Neumann measurement, $\E$ can be chosen as its own Naimark extension such that $d'=d$. Thus, in this case Def.~\ref{def:pickraus} and Prop.~\ref{prop:pickraus} imply that PI Kraus operators are equivalent to standard incoherent Kraus operators.

\subsection{Selective free operations and strong monotonicity}
Building on the previous section, we are ready to define two classes of probabilistically free channels. These have the property that even when we post-select outcomes of the operation, POVM-coherence cannot be created from an incoherent input state. We call a channel $\Lambda$ a \emph{selective} POVM-incoherent (PI) operation, if it admits a Kraus decomposition $\Lambda[X]=\sum_lK_lXK_l\ad$ such that all operators $K_l$ are POVM-incoherent~\eqref{pickraus}. Moreover, we call $\Lambda$ \emph{strictly} POVM-incoherent (SPI), if additionally all adjoint operators $(K_l)\ad$ are POVM-incoherent. These definitions clearly generalize the classes of incoherent operations IO and strictly incoherent operations SIO~\cite{streltsov2016quantum}, respectively. We obtain the following hierarchy of POVM-incoherent operations
\begin{align}
\textrm{SPI } \subseteq \textrm{ PI } \subseteq \textrm{ MPI},
\end{align}
 where MPI denotes the maximal set of POVM-incoherent operations from Eq.~\eqref{picops}.

This leads to the following definition, which extends the requirements on a POVM-coherence measure $C(\rho,\E)$  from Sec.~\ref{sec:pbcoh}. It guarantees that free operations cannot create coherence on average when the observer has access to measurement results.
\begin{enumerate}
\item[(P2s)] \emph{Strong monotonicity of POVM-coherence measure:} $C(\rho,\E)$ does not increase on average under selective POVM-incoherent operations PI, i.e.,
\begin{align}
\sum_lp_lC(\rho_l,\E) \leq C(\rho,\E)
\end{align}
for any set of POVM-incoherent Kraus operators $K_l$ defining probabilities $p_l=\tr(K_l\rho K_l\ad)$ and post-measurement states $\rho_l=K_l\rho K_l\ad/p_l$.
\item[(B2s)] \emph{Strong monotonicity of block-coherence measure:} Same as (P2s) for the special case of projective measurements $\E=\P$ and selective block-incoherent operations BI.
\end{enumerate}

Note that as a consequence of convexity, any measure that obeys (P2s) also satisfies (P2) for the class of PI operations, in analogy to e.g.~\cite{baumgratz2014quantifying}.
As in Ref.~\cite{bischof2018resource} we can show that POVM-coherence measures, by construction, inherit the properties of the underlying block-coherence measure.

\begin{proposition}\label{prop:smon}
Let $C(\rho,\E)$ be a POVM-based coherence measure derived via~\eqref{pbcmeas} from a block-coherence measure $C(\rho',\P)$ that obeys strong monotonicity (B2s). Then, $C(\rho,\E)$ obeys strong monotonicity (P2s) with respect to PI operations.
\end{proposition}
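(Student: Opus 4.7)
The strategy is to lift the selective PI operation on $\H$ to its defining BI Kraus set $\{K_l'\}$ on the Naimark space $\H'$, and then invoke the assumed strong monotonicity (B2s) of the underlying block-coherence measure. The key link is the definition $C(\rho,\E)=C(\Ec[\rho],\P)$ together with the subspace-preserving property~\eqref{spkraus}, which guarantees that the embedded post-measurement states on $\H'$ are precisely the images of $\Ec[\rho]$ under $\{K_l'\}$. By Prop.~\ref{prop:pickraus} it suffices to work with a single lift of the given PI family $\{K_l\}$, namely any BI, subspace-preserving, normalized $\{K_l'\}$ with $K_l=T\ad K_l' T$.

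The main computational step is to establish the identity $\Ec[K_l\rho K_l\ad]=K_l'\Ec[\rho](K_l')\ad$ for every $l$. Expanding the left-hand side using $TT\ad=\Pie$ and $\Ec[\rho]=T\rho T\ad$ yields $\Pie K_l'\Ec[\rho](K_l')\ad\Pie$. Since $\Ec[\rho]$ is supported on $\He$ one has $\Pie\Ec[\rho]=\Ec[\rho]=\Ec[\rho]\Pie$; combining this with~\eqref{spkraus}, $K_l'\Pie=\Pie K_l'\Pie$, and its adjoint $\Pie(K_l')\ad=\Pie(K_l')\ad\Pie$, the outer projectors can be absorbed, leaving $K_l'\Ec[\rho](K_l')\ad$. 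An immediate consequence is that Born probabilities match, $p_l=\tr(K_l\rho K_l\ad)=\tr(K_l'\Ec[\rho](K_l')\ad)\defeq p_l'$, and that $\Ec[\rho_l]=K_l'\Ec[\rho](K_l')\ad/p_l$ is exactly the $l$-th post-measurement state on the Naimark space under the BI Kraus family $\{K_l'\}$.

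With this identity in hand, the conclusion is essentially one line: applying~\eqref{pbcmeas} twice,
\begin{align*}
\sum_l p_l\, C(\rho_l,\E) \;=\; \sum_l p_l\, C(\Ec[\rho_l],\P) \;\leq\; C(\Ec[\rho],\P) \;=\; C(\rho,\E),
\end{align*}
where the inequality is (B2s) applied to the complete BI Kraus family $\{K_l'\}$ acting on the block-coherent input $\Ec[\rho]$. The only step that demands care is the subspace-preserving algebra used to move $K_l'$ past the embedding projectors; beyond that the argument is a direct, selective analogue of the inheritance argument for (non-selective) MPI operations given in Ref.~\cite{bischof2018resource}, and no genuine obstacle is anticipated.
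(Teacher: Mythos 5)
Your proposal is correct and follows essentially the same route as the paper's proof: both establish the key identity $p_l\Ec[\rho_l]=K_l'\Ec[\rho](K_l')\ad$ by expanding with $\Pie=TT\ad$ and absorbing the outer projectors via the subspace-preserving property~\eqref{spkraus} together with $\Ec[\rho]=\Pie\Ec[\rho]\Pie$, and then apply (B2s) to the lifted BI family on the Naimark space. The only cosmetic difference is your invocation of Prop.~\ref{prop:pickraus}, which is not actually needed since any lift $\{K_l'\}$ guaranteed by Definition~\ref{def:pickraus} suffices.
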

\begin{proof} --
In the following, we make use of the constructions from Sec.~\ref{sec:pickraus}. Let $\{K_l\}$ be a set of POVM-incoherent Kraus operators, leading to the post-measurement states $\rho_l=K_l\rho K_l\ad/p_l$. Embedding these yields Naimark space operators given by 
\begin{align}
p_l\Ec[\rho_l]=TK_l\rho K_l\ad T\ad&=TT\ad K_l' T\rho T\ad (K_l')\ad TT\ad \ncl
&=\Pie K_l' \Ec[\rho] (K_l')\ad \Pie,
\end{align}
where we have used $\Ec[\rho]=T\rho T\ad$, Eq.~\eqref{pickraus} and $\Pie=TT\ad$.
Since $\Ec[\rho]=\Pie\Ec[\rho]\Pie$, we employ Eq.~\eqref{spkraus} twice to obtain the following simplification:
\begin{align}
\Pie K_l' \Ec[\rho] (K_l')\ad\Pie &= \Pie K_l' \Pie\Ec[\rho]\Pie (K_l')\ad\Pie \ncl
&= K_l'\Ec[\rho](K_l')\ad.
\end{align}
Thus, we have shown that $p_l\Ec[\rho_l] = K_l'\Ec[\rho](K_l')\ad$, which immediately implies the desired relation:
\begin{align}
\sum_lp_lC(\rho_l,\E) &= \sum_lp_lC(\Ec[\rho_l],\P) \ncl
&= \sum_lp_lC(K_l' \Ec[\rho] (K_l')\ad/p_l,\P) \ncl
&\leq C(\Ec[\rho],\P) = C(\rho,\E).
\end{align}
In the first and last line we have used Eq.~\eqref{pbcmeas} and the inequality holds since $C(\rho',\P)$ is by assumption strongly monotonic (B2s) with respect to block-incoherent Kraus operators $K_l'$.
\end{proof}

An example is given by the relative entropy of block coherence $C_{\rel}(\rho',\P)$, which satisfies (B2s), as one can prove analogously to Ref.~\cite{baumgratz2014quantifying} for the standard coherence measure. Thus, Prop.~\ref{prop:smon} implies that the POVM-coherence measure $C_{\rel}(\rho,\E)$ from Eq.~\eqref{repbc} is strongly monotonic.

\section{More measures of POVM-based coherence}
\label{sec:robcoh}

So far, the relative-entropy-based quantifier introduced in Ref.~\cite{bischof2018resource} is the only known well-defined measure of POVM-based coherence. In this section we introduce further POVM-coherence measures, which are generalizations of standard coherence measures known in the literature~\cite{streltsov2016quantum}. As before, $\E$ is a POVM on $\H$ and $\P$ any Naimark extension of it on the space $\H'$. We denote by $\Sc$ ($\Sc'$) the set of density matrices on $\H$ ($\H'$).

First, we discuss distance-based block-coherence quantifiers, which are defined as
\begin{align}\label{dbbcmeas}
C(\rho',\P)=\inf_{\sigma\in\Sc'} D(\rho',\Delta[\sigma]),
\end{align}
where $D\geq0$ is a distance such that $D(\rho,\sigma)=0\Leftrightarrow\rho=\sigma$ and $\Delta$ is the block-dephasing operation from Eq.~\eqref{bdeph}. The infimum runs over quantum states $\sigma\in\Sc'$. In Ref.~\cite{bischof2018resource} it was shown that a distance-based quantifier satisfies monotonicity (B2) (see~\ref{bcmeasp}) if $D$ is contractive under quantum operations, that is, $D(\Lambda[\rho],\Lambda[\sigma])\leq D(\rho,\sigma)$ holds for any channel $\Lambda$.

Distance-based POVM-coherence measures $C(\rho,\E)$ are derived from the measures $C(\rho',\P)$~\eqref{dbbcmeas} via Eq.~\eqref{pbcmeas}. We show below that this class of measures is independent of the choice of Naimark extension. Importantly, this implies that the POVM-coherence measure coincides for von Neumann measurements with the corresponding standard coherence measure~\cite{streltsov2016quantum}.

\begin{observation}\label{obs}
Let $C(\rho,\E)$ be a POVM-based coherence measure that is well-defined, i.e., it is invariant under the choice of Naimark extension $\P$ in Eq.~\eqref{pbcmeas}. Then, in the special case of orthogonal rank-1 (von Neumann) measurements, $C(\rho,\E)$ is equal to its counterpart in standard coherence theory.
\end{observation}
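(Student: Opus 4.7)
The plan is to exploit the freedom in choosing the Naimark extension together with the well-definedness assumption. For a von Neumann measurement $\E=\{\proj{i}\}_{i=1}^{d}$ on $\H$, observe that $\E$ is itself a projective measurement on $\H$, so it qualifies as a Naimark extension of itself: taking $\H'=\H$ with $d'=d$, the embedding channel $\Ec=\id$ is an (isometric) embedding and $\P=\E$ extends $\E$ trivially to the whole space. In particular, $\tr(E_i\rho)=\tr(P_i\Ec[\rho])$ holds automatically since both sides coincide.

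With this choice of Naimark extension, Eq.~\eqref{pbcmeas} reduces to
\begin{align}
C(\rho,\E)=C(\Ec[\rho],\P)=C(\rho,\{\proj{i}\}),
\end{align}
that is, the POVM-based coherence equals the block coherence of $\rho$ with respect to the projective measurement $\{\proj{i}\}$. As already noted in Sec.~\ref{sec:bbcoh} of the paper, block-coherence concepts reduce to their counterparts in the standard resource theory of coherence whenever all the projectors $P_i$ have rank one. Thus the right-hand side is precisely the standard coherence measure associated with the incoherent basis $\{\ket{i}\}$.

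Finally, the well-definedness hypothesis of the observation guarantees that the value of $C(\rho,\E)$ computed using the trivial Naimark extension $\P=\E$ coincides with the value computed using any other Naimark extension. Therefore $C(\rho,\E)$ unambiguously equals its standard-coherence counterpart. The argument is essentially one line once the trivial Naimark extension is identified; the only subtlety to double-check is that the formal definition of a Naimark extension in the paper indeed admits the case $d'=d$ with $\Ec=\id$, which is immediate from the conditions $d'\geq d$ and $\tr(E_i\rho)=\tr(P_i\Ec[\rho])$. Hence there is no real obstacle, and the observation follows directly from the construction.
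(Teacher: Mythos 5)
Your proposal is correct and follows essentially the same route as the paper: choose the trivial Naimark extension $\P=\E$ with $\Ec=\id$, reduce Eq.~\eqref{pbcmeas} to the rank-one block-coherence measure (i.e., the standard measure), and invoke the well-definedness hypothesis to conclude that this value agrees with that from any other Naimark extension. The extra care you take in checking that $d'=d$ and $\Ec=\id$ satisfy the definition is sound but not a departure from the paper's argument.
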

\begin{proof} --
The assertion holds because for the POVM $E_i=\proj{i}$, the Naimark extension can be chosen as $\P=\E$ and the embedding can be chosen trivial, $\Ec[\rho]=\rho$.
Thus, the independence property together with Eq.~\eqref{pbcmeas} guarantee that the POVM-based measure generalizes the standard measure. Note that the same argument holds for projective measurements, where $E_i=P_i$.
\end{proof}

\begin{proposition}\label{prop:measind}
Any distance-based POVM-coherence measure $C(\rho,\E)$ defined via Eqs.~\eqref{pbcmeas} and~\eqref{dbbcmeas} is invariant under the choice of Naimark extension if the distance is contractive.
\end{proposition}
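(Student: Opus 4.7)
The plan is to show that for any two Naimark extensions $\P_1,\P_2$ of $\E$ on spaces $\H'_1,\H'_2$, with isometric embeddings $\Ec_1,\Ec_2$ and block-dephasings $\Delta_1,\Delta_2$, the corresponding distance-based coherence measures $C_1(\rho,\E)$ and $C_2(\rho,\E)$ are equal. The first simplification is to rewrite each quantifier as an infimum directly over the set $\Ic_j$ of block-incoherent states on $\H'_j$, namely $C_j(\rho,\E) = \inf_{\tau_j\in\Ic_j} D(\Ec_j[\rho],\tau_j)$, using that $\Delta_j[\Sc'_j]=\Ic_j$. This reduces the problem to comparing minimum distances of the embedded state $\Ec_j[\rho]$ to two different sets $\Ic_j$.

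The core step is to exhibit a CPTP map $\Lambda_{1\to 2}:\Sc'_1\to\Sc'_2$ that satisfies (a) $\Lambda_{1\to 2}\circ\Ec_1=\Ec_2$ (it intertwines the two embeddings) and (b) $\Lambda_{1\to 2}[\Ic_1]\subseteq\Ic_2$ (it preserves block-incoherence). Given such a map, contractivity of $D$ yields, for any $\tau_1\in\Ic_1$,
\begin{align}
D(\Ec_2[\rho],\Lambda_{1\to 2}[\tau_1])
&= D(\Lambda_{1\to 2}[\Ec_1[\rho]],\Lambda_{1\to 2}[\tau_1]) \ncl
&\leq D(\Ec_1[\rho],\tau_1).
\end{align}
Since $\Lambda_{1\to 2}[\tau_1]\in\Ic_2$, taking the infimum over $\tau_1\in\Ic_1$ gives $C_2(\rho,\E)\leq C_1(\rho,\E)$. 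A symmetric construction of $\Lambda_{2\to 1}$ provides the reverse inequality, establishing the claimed equality.

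The construction of $\Lambda_{1\to 2}$ leverages the freedom in Naimark's dilation: any two Naimark extensions of the same POVM differ only by (i) a unitary acting as the identity on the embedded subspace $\H$ while rotating the orthogonal complement, or (ii) attachment or partial trace of ancillary degrees of freedom whose projective structure is aligned with the block decomposition. Both operations are manifestly CPTP and respect both the embedding and block-incoherence by design, in complete parallel to the toolbox used in Appendix~\ref{app:rne} to prove Prop.~\ref{prop:pickraus}. The main obstacle is verifying property (b) when the two extensions have different dimensions and inequivalent block structures: one must choose the ancillary projectors so that they respect the block decompositions on both sides simultaneously, and check that every block-incoherent state in $\Ic_1$ is carried into $\Ic_2$ under the composition of these elementary channels. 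Once this structural fact is in place, the conclusion follows directly from contractivity of $D$.
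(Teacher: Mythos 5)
Your proposal is correct and follows essentially the same route as the paper: both arguments reduce the claim to exhibiting CPTP maps between the two Naimark spaces that intertwine the embeddings and carry block-incoherent states to block-incoherent states, and then invoke contractivity of $D$ once in each direction. The intertwining channels you defer to as ``the main obstacle'' are exactly the maps $\Nc=\Rc\circ\Uc$ and $\hat\Nc=\Uc\ad\circ\Qc$ constructed in Appendix~\ref{app:rne}, so no new ingredient is required beyond what you describe.
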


\begin{proof} --
Let $\P$, $\hat\P$ be two Naimark extensions of the same POVM $\E$ such that $\rank\hat P_i\leq\rank P_i$. The corresponding block-dephasing operations are denoted $\Delta,\hat\Delta$. We need to show that $C(\Ec[\rho],\P)= C(\Ec[\rho],\hat\P)$. 
In the Appendix~\ref{app:rne} we show that there exists a channel (completely positive trace-preserving map) $\Nc$ which satisfies $\Nc\circ\Ec=\Ec$ and $\Nc\circ\Delta=\hat\Delta\circ\Nc$~\cite{bischof2018resource}.

Let $C(\rho',\P)=D(\rho',\Delta[\sigma^*])$ be a distance-based block coherence measure, where $\sigma^*$ denotes a state that achieves the minimum. Then, it holds that
\begin{align}
C(\Ec[\rho],\P) &= D(\Ec[\rho],\Delta[\sigma^*]) \ncl
&\geq D(\Nc\circ\Ec[\rho],\Nc\circ\Delta[\sigma^*]) \ncl
&= D(\Ec[\rho],\hat\Delta\circ\Nc[\sigma^*]) \ncl
&= D(\Ec[\rho],\hat\Delta[\hat\sigma]) \geq C(\Ec[\rho],\hat\P),
\end{align}
where we have defined $\hat\sigma\defeq\Nc[\sigma^*]$.
In the first inequality we have used the contractivity of $D$. The reverse inequality $C(\Ec[\rho],\P)\leq C(\Ec[\rho],\hat\P)$ follows from similar arguments but is more straightforward: the optimal state $\hat\Delta[\hat\sigma^*]$ on the smaller Naimark space can be embedded in the larger Naimark space and suitably rotated such that it is incoherent with respect to $\Delta$. This is achieved by the channel $\hat\Nc\defeq\Uc\ad\circ\Qc$ which satisfies $\hat\Nc\circ\Ec=\Ec$ and $\hat\Nc\circ\hat\Delta=\Delta\circ\hat\Nc$,
see App.~\ref{app:rne}.
\end{proof}

\paragraph*{Example:} Consider the distance measure $D_{\geo}(\rho,\sigma)=1-F^2(\rho,\sigma)$, where the fidelity $F(\rho,\sigma)=\tr\sqrt{\sqrt\rho\sigma\sqrt\rho}$ quantifies how close two quantum states $\rho,\sigma$ are. We define the \emph{geometric POVM-based coherence} $C_{\geo}(\rho,\E)$ via Eqs.~\eqref{pbcmeas} and~\eqref{dbbcmeas} for the distance $D_{\geo}$. The fidelity satisfies $F^2(\Lambda[\rho],\Lambda[\sigma])\geq F^2(\rho,\sigma)$ for any quantum operation $\Lambda$~\cite{nielsenchuang}, from which follows that $C_{\geo}(\rho,\E)$ obeys monotonicity (P2). Observation~\ref{obs} implies that this measure generalizes the standard geometric coherence~\cite{streltsov2015measuring}.

In the following, we introduce and study the robustness of POVM-based coherence which generalizes the measure from~\cite{napoli2016robustness}. This quantity is derived from the robustness of block coherence, which is equal to the robustness of asymmetry from Ref.~\cite{piani2016robustness} for the U(1) symmetry group $\{U(\theta)=e^{-i\theta\sum_kkP_k}\}$. Let $\P$ be a projective measurement and $\Delta$ the corresponding dephasing operator~\eqref{bdeph}. We define the robustness of block coherence of a quantum state $\rho$ as
\begin{align}
C_{\rob}(\rho,\P)&=\min_{\tau,\delta\in\Sc}\bigl\{s\geq0:\frac{\rho+s\tau}{1+s}=\Delta[\delta] \bigr\} \label{rob1}\\
&=\min_{\delta\in\Sc}\bigl\{s\geq0 : \rho\leq (1+s) \Delta[\delta]\bigr\}. \label{rob2}
\end{align}
In other words, $C_{\rob}(\rho,\P)$ is the minimal mixing weight $s$ required to make $\rho$ block-incoherent. It is clear that the measure satisfies faithfulness (B1). Moreover, the arguments from Ref.~\cite{piani2016robustness} imply that $C_{\rob}(\rho,\P)$ satisfies convexity (B3), and strong monotonicity (B2s) under selective block-incoherent operations. Interestingly, the robustness measure can be related to the \emph{maximum relative entropy of block coherence}, which we define as $C_{\max}(\rho,\P)=\min_{\delta\in \Sc}\{\lambda\geq0 : \rho\leq2^\lambda \Delta[\delta]\}$~\cite{bu2017maximum}. By comparison with Eq.~\eqref{rob2} we infer that $C_{\max}(\rho,\P)=\log_2[1+C_{\rob}(\rho,\E)]$. A further characterization of $C_{\rob}$ is given in the Appendix~\ref{app:rob}.

Now, let $\E$ be a POVM and $\P$ any Naimark extension of it. We employ the standard construction from Eq.~\eqref{pbcmeas} to define the \emph{robustness of POVM-based coherence} as
\begin{align}\label{robpbc}
C_{\rob}(\rho,\E)\defeq C_{\rob}(\Ec[\rho],\P).
\end{align}
The following result establishes $C_{\rob}(\rho,\E)$ as a proper measure of POVM-coherence.

\begin{proposition}\label{prop:rob}
The robustness of POVM-based coherence $C_{\rob}(\rho,\E)$ is well-defined and a POVM-coherence measure that satisfies strong monotonicity (P2s). It admits the following form:
\begin{align}\label{robpbc}
C_{\rob}(\rho,\E)=\min_{\tau\in \Sc'}
\bigl\{s\geq0: s\tau_{i,j}=-A_i\rho A_j\ad\ \forall i\neq j \bigr\},
\end{align}
where $\tau=\sum_{i,j}\tau_{i,j}\ot\ketbra{i}{j}$ and $A_i=\sqrt{E_i}$. 
\end{proposition}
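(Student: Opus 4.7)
The plan is to establish the proposition in three parts: invariance of $C_{\rob}(\rho,\E)$ under the choice of Naimark extension (well-definedness), the minimal requirements plus strong monotonicity (P2s), and the closed-form expression.

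For well-definedness, I would follow the strategy used in the proof of Proposition~\ref{prop:measind}. The defining inequality $\Ec[\rho]\leq(1+s)\Delta[\delta]$ is preserved by any completely positive map. Applying the channel $\Nc$ from Appendix~\ref{app:rne} (satisfying $\Nc\circ\Ec=\Ec$ and $\Nc\circ\Delta=\hat\Delta\circ\Nc$) therefore converts a feasible pair $(s,\delta)$ for Naimark extension $\P$ into a feasible pair $(s,\Nc[\delta])$ for $\hat\P$ with the same value of $s$, and the analogous channel $\hat\Nc$ does the reverse, giving $C_{\rob}(\Ec[\rho],\P)=C_{\rob}(\Ec[\rho],\hat\P)$.

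The block-coherence robustness $C_{\rob}(\rho',\P)$ satisfies faithfulness~(B1), convexity~(B3) and strong monotonicity (B2s) by the arguments of~\cite{piani2016robustness}. These properties lift to the POVM level via the construction in Eq.~\eqref{pbcmeas}: faithfulness and convexity transfer directly, while (P2s) follows from Proposition~\ref{prop:smon} applied with $C=C_{\rob}$ (and the standard monotonicity (P2) then follows from (P2s) together with convexity).

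The main content is the closed form. I would work with the concrete Naimark extension implemented by the isometry $V=\sum_iA_i\ot\ket{i}$ on $\H'=\H\ot\H_0$, with $P_i=\1\ot\proj{i}$. Then $\Ec[\rho]=V\rho V\ad$ has block structure $(\Ec[\rho])_{i,j}=A_i\rho A_j\ad$, and $\Delta$ erases all off-diagonal blocks. Rewriting Eq.~\eqref{rob1} in the form $\Ec[\rho]+s\tau=(1+s)\Delta[\delta]$ with $\tau,\delta\in\Sc'$, the block-diagonality of the right-hand side forces $s\tau_{i,j}=-A_i\rho A_j\ad$ for all $i\neq j$, matching the constraints in the claim. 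The delicate step is to certify that, for any $\tau\in\Sc'$ satisfying those off-diagonal conditions, a valid $\delta$ always exists, so that $\delta$ can be eliminated from the optimization. I would take $\delta$ to be block-diagonal with blocks $\delta_{i,i}=(A_i\rho A_i\ad+s\tau_{i,i})/(1+s)$, which are positive semidefinite as sums of positive operators and whose traces sum to $(\tr\rho+s\tr\tau)/(1+s)=1$; hence $\delta\in\Sc'$ automatically, and the minimization over $\tau$ alone yields the stated formula. The invariance from the first step then guarantees that the resulting expression does not depend on this particular Naimark extension.
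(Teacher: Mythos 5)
Your proposal is correct and follows essentially the same route as the paper: well-definedness via the channels $\Nc$ and $\hat\Nc$ from Appendix~\ref{app:rne} applied to the constraint in Eq.~\eqref{rob2}, strong monotonicity via Proposition~\ref{prop:smon} together with the properties of the block-coherence robustness from Ref.~\cite{piani2016robustness}, and the closed form via the canonical Naimark extension $P_i=\1\ot\proj{i}$ with elimination of $\delta$ through the block-diagonal choice $\delta_{i,i}=(A_i\rho A_i\ad+s\tau_{i,i})/(1+s)$. No gaps.
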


Observation~\ref{obs} implies that in the special case of von Neumann measurements $\E=\{\proj{i}\}$, $C_{\rob}(\rho,\E)$ coincides with the standard robustness of coherence~\cite{napoli2016robustness}. The evaluation of $C_{\rob}$ in Eq.~\eqref{robpbc} is a semidefinite program (SDP). It can be simplified to the following form suited for numerical computation, for example, via the open-source MATLAB-based toolbox YALMIP~\cite{Lofberg2004}: 
\begin{align}
&C_{\rob}(\rho,\E) 
=\min \sum_i\tr(\sigma_{i,i}) \ncl
&\textrm{s.t.}\quad \sigma_{i\neq j,j}=-A_i\rho A_j\ad,\quad \sum_{i,j}\sigma_{i,j}\ot\ketbra{i}{j}\geq0.
\end{align}
This form is obtained from Prop.~\ref{prop:rob} by setting $\sigma=s\tau$.

\begin{proof}[Proof of Prop.~\ref{prop:rob}] --
First, we prove that the definition of $C_{\rob}(\rho,\E)$ is not ambiguous as it leads to the same quantity for any Naimark extension $\P$ of $\E$.
Let $\P$, $\hat\P$ be two Naimark extensions of the same POVM $\E$ such that $\rank\hat P_i\leq\rank P_i$. The corresponding block-dephasing operations are denoted $\hat\Delta,\Delta$.
It is clear that $C_{\rob}(\Ec[\rho],\P)\leq C_{\rob}(\Ec[\rho],\hat\P)$ since the optimal state $\hat\Delta[\hat\delta^*]$ in Eq.~\eqref{rob2} on the smaller Naimark space can be embedded in the larger Naimark space and suitably rotated such that it is incoherent with respect to $\Delta$.
We proceed to prove the reverse inequality by employing the channel $\Nc$ from the proof of Prop.~\ref{prop:measind}. Take Eq.~\eqref{rob2} with optimal quantities 
$s^*,\delta^*$ and apply $\Nc$ to both sides of the constraint
\begin{align}
\Ec[\rho]\leq (1+s^*) \Delta[\delta^*]
\Rightarrow \Nc\circ\Ec[\rho]&\leq (1+s^*) \Nc\circ\Delta[\delta^*] \ncl
\Leftrightarrow\quad \Ec[\rho]&\leq (1+s^*) \hat\Delta[\hat\delta],
\end{align}
where we have defined $\hat\delta=\Nc[\delta^*]$. Thus, $C_{\rob}(\Ec[\rho],\hat\P) \leq s^* =C_{\rob}(\Ec[\rho],\P)$. Altogether, we conclude that $C_{\rob}(\rho,\E)$ is independent of the Naimark extension choice. Moreover, $C_{\rob}$ satisfies strong monotonicity (P2s) because of Prop.~\ref{prop:smon} and Property~2 in Ref.~\cite{piani2016robustness}.

In order to prove Eq.~\eqref{robpbc}, we use the following result established as Prop.~4 in Ref.~\cite{bischof2018resource}. Any POVM-coherence measure can be written as
\begin{align}\label{altpbcmeas}
C(\rho,\E) = C(\Ec_V[\rho],\{\1\ot\proj{i}\}),
\end{align}
with the embedding $\Ec_V[\rho] =V\rho\ot\proj{1}V\ad=\sum_{i,j}A_i\rho A_j\ad\ot\ketbra{i}{j}$ containing an interaction isometry $V$, and the Naimark extension $\{\1\ot\proj{i}\}$. By using that in this formulation, $\delta\in\Ic\Leftrightarrow\delta=\sum_i\delta_i\ot\proj{i}$ and employing the parameterization $\tau=\sum_{i,j}\tau_{i,j}\ot\ketbra{i}{j}$, we obtain
\medmuskip=2mu
\begin{align}
&\quad C_{\rob}(\rho,\E) \ncl
&=\!\!\min_{\tau,\delta\in\Sc'}\!\bigl\{s\geq0 :\sum_{i,j}(A_i\rho A_j\ad+s\tau_{i,j})\ot\ketbra{i}{j}=(1+s)\sum_i\delta_i\ot\proj{i}\bigr\} \ncl
&=\min_{\tau\in\Sc'}
\bigl\{s\geq0: s\tau_{i,j}=-A_i\rho A_j\ad\ \forall i\neq j \bigr\},
\end{align}
Note that the constraint for $i=j$ was neglected in the last line, since for any $s$ and state $\tau$ satisfying the last line, we can define $\delta_i = (A_i\rho A_i\ad+s\tau_{i,i})/(1+s)$, which directly implies that $\delta\geq0$ and $\tr\delta=1$.
\end{proof}

We also define the following quantifier, the \emph{$\ell_1$-norm of POVM-based coherence}: $C_{\ell_1}(\rho,\E)=\sum_{i\neq j}\trn{P_i\Ec[\rho] P_j}$, where $\trn{X}=\tr(\sqrt{X\ad X})$ denotes the trace norm. By making use of Eq.~\eqref{altpbcmeas} and that $\trn{X\ot Y}=\trn{X}\trn{Y}$ holds for operators $X,Y$, it is straightforward to show that a simplified, local expression holds
\begin{align}\label{l1normpbc}
C_{\ell_1}(\rho,\E)=\sum_{i\neq j}\trn{A_i\rho A_j\ad}.
\end{align}
This generalized coherence quantifier satisfies faithfulness (P1), see Prop.~5 in~\cite{bischof2018resource}, and convexity (P3). Since for a von Neumann measurement $C_{\ell_1}(\rho,\E)$ reduces to the standard $\ell_1$-norm of coherence, we can infer that the measure does not satisfy monotonicity (P2) for the class MPI in general, see Ref.~\cite{bu2016note}. However, $C_{\ell_1}$ satisfies (P2) under MPI for any two-outcome POVM $\E=\{E_i\}_{i=1}^2$, which follows from Proposition 9 of Ref.~\cite{aberg2006quantifying} together with Prop.~\ref{prop:smon}. We leave open for future work whether $C_{\ell_1}(\rho,\E)$ satisfies strong monotonicity (P2s) under PI, which holds for von Neumann measurements~\cite{baumgratz2014quantifying}.

\begin{figure*}[!ht]
\centering
\includegraphics[width=0.35\linewidth,valign=T]{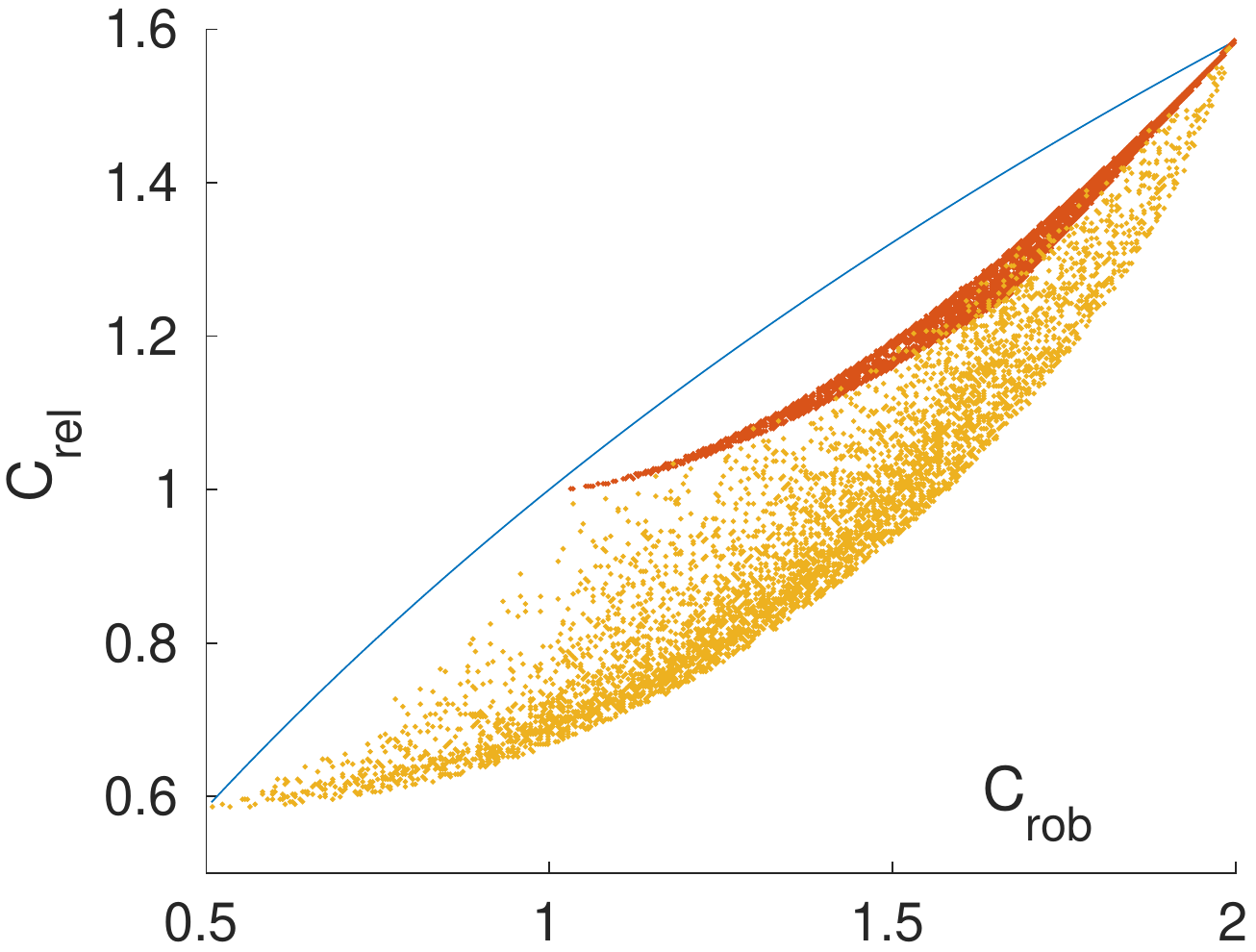}\hspace*{5em}
\includegraphics[width=0.35\linewidth,valign=T]{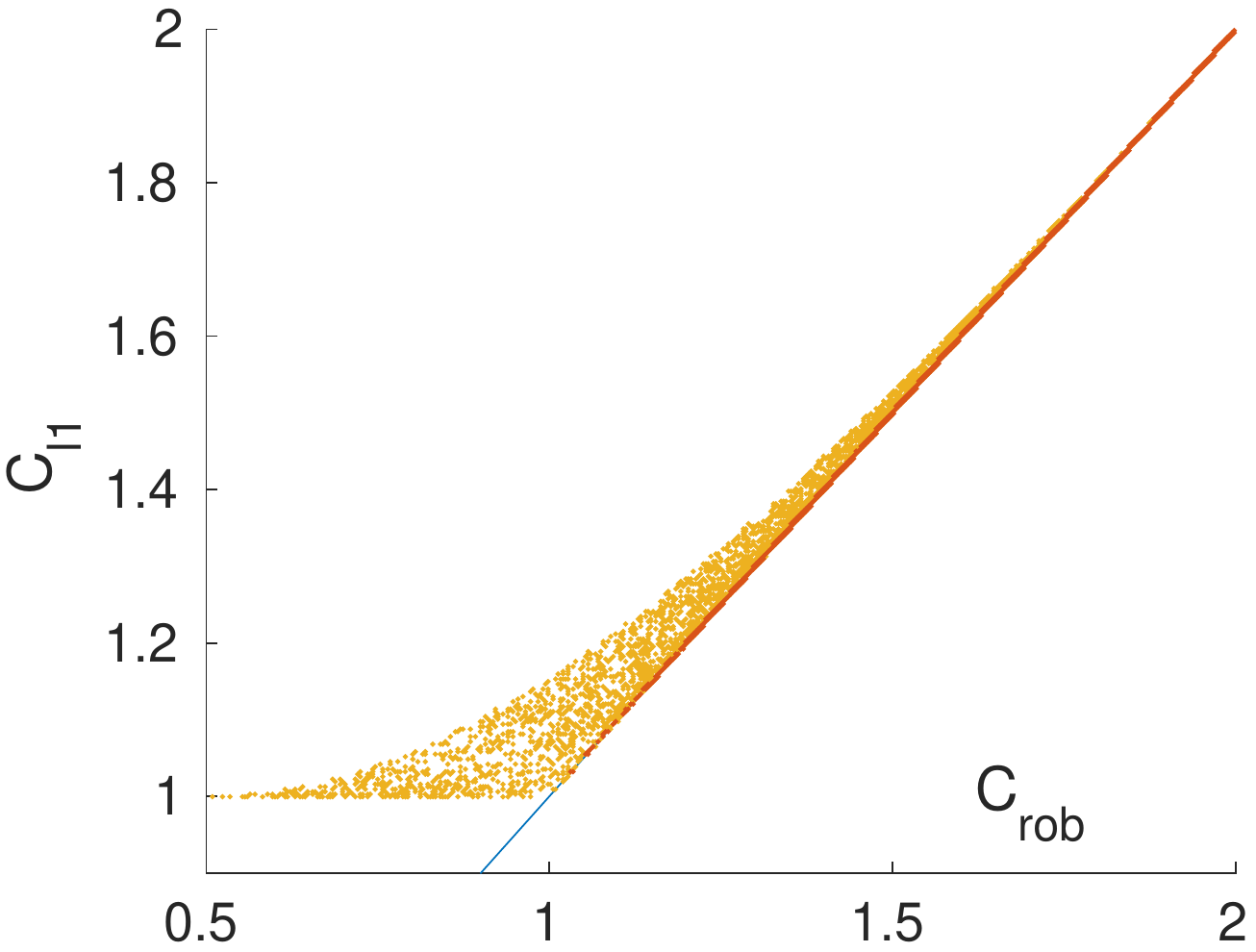}
\caption{\label{fig:robrel}
POVM-coherence measures in relation to the generalized robustness of coherence 
$s\defeq C_{\rob}(\rho,\E)$ for the qubit trine POVM $\E(\delta=1)$~\eqref{aux5}.
\emph{Left:} the blue line indicates the bound $C_{\rel}(\rho,\E)\leq \log_2(1+s)$ from Eq.~\eqref{robrel}. Red (yellow) dots represent randomly sampled pure (mixed) states. Similar to standard coherence theory~\cite{rana2017logarithmic}, the upper bound is not tight.
\emph{Right:} the blue, straight line indicates the graph of $C_{\ell_1}(\rho,\E)=s$, on which all pure states lie (red dots). The yellow dots represent mixed states for which $C_{\ell_1}(\rho,\E)\geq s$ holds~\eqref{robl1}.
}
\end{figure*}

For completeness, we show that $C_{\ell_1}(\rho,\E)$ is invariant under the choice of Naimark extension and unambigiously given by Eq.~\eqref{l1normpbc}. Given two Naimark extensions $\P$, $\hat\P$, we utilize the isometry $Q$ from App.~\ref{app:rne} satisfying $P_iQ = Q\hat P_i$. Further, we employ the unitary $U$ on the larger Naimark space with properties $UP_i=P_iU$ and $U\Pie=Q\Pie$, where $\Pie$ is the projector onto the embedded original space $\He$. Since the trace norm is invariant under multiplication by isometries $V,W$, $\trn{X} = \trn{VXW\ad}$, we have
\begin{align}
\trn{\hat P_i\Ec[\rho] \hat P_j} &= \trn{U\ad Q\hat P_i\Ec[\rho] \hat P_jQ\ad U} \ncl
&= \trn{P_iU\ad Q \Ec[\rho]Q\ad U P_j} = \trn{P_i\Ec[\rho]P_j}.\nonumber\qed
\end{align}
  
The following result establishes general relations between POVM-coherence measures that are visualized in Fig.~\ref{fig:robrel}. These findings generalize results from Ref.~\cite{rana2017logarithmic}.

\begin{proposition}
Given an $n$-outcome POVM $\E$, the following inequalities hold for the measures from Eqs.~\eqref{repbc},~\eqref{robpbc},~\eqref{l1normpbc}:
\begin{align}
&C_{\rob}(\rho,\E)\leq C_{\ell_1}(\rho,\E) \leq n-1, \label{robl1} \\
&C_{\rel}(\rho,\E) \leq \log_2[1+C_{\rob}(\rho,\E)]. \label{robrel}
\end{align}
Moreover, $C_{\rob}(\psi,\E)=C_{\ell_1}(\psi,\E)$ holds for any pure state $\psi$.
\end{proposition}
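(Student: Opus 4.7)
The plan is to handle the three parts with different tools: the SDP form of $C_{\rob}$ from Prop.~\ref{prop:rob} for the first inequality and the pure-state identity, the operator H\"older inequality for the $n{-}1$ bound, and the standard comparison $S(\rho\|\sigma)\leq D_{\max}(\rho\|\sigma)$ for the relative-entropy bound.

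For $C_{\rob}(\rho,\E)\leq C_{\ell_1}(\rho,\E)$ I would construct an explicit feasible pair $(s,\tau)$ for the SDP~\eqref{robpbc}, building $\tau$ block by block. For each ordered pair $i<j$, take the singular value decomposition $A_i\rho A_j\ad=U_{ij}\Sigma_{ij}V_{ij}\ad$ and define a Naimark operator $T_{ij}$ whose off-diagonal blocks are $-A_i\rho A_j\ad\ot\ketbra{i}{j}$ and its conjugate, and whose diagonal blocks are $U_{ij}\Sigma_{ij}U_{ij}\ad\ot\proj{i}$ and $V_{ij}\Sigma_{ij}V_{ij}\ad\ot\proj{j}$. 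The factorization $T_{ij}=MM\ad$ with $M=U_{ij}\sqrt{\Sigma_{ij}}\ot\ket{i}-V_{ij}\sqrt{\Sigma_{ij}}\ot\ket{j}$ certifies $T_{ij}\geq0$, and clearly $\tr T_{ij}=2\trn{A_i\rho A_j\ad}$. Summing $T=\sum_{i<j}T_{ij}$ produces a PSD operator whose off-diagonal blocks are exactly $-A_i\rho A_j\ad$ and whose trace equals $C_{\ell_1}(\rho,\E)$. The pair $(s,\tau)=(C_{\ell_1},T/C_{\ell_1})$ is then feasible in~\eqref{robpbc}, so $C_{\rob}\leq C_{\ell_1}$.

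For $C_{\ell_1}\leq n-1$, the operator H\"older inequality $\trn{XY\ad}\leq\lVert X\rVert_2\lVert Y\rVert_2$ applied with $X=A_i\sqrt\rho$, $Y=A_j\sqrt\rho$ yields $\trn{A_i\rho A_j\ad}\leq\sqrt{p_ip_j}$. Summing and then applying Cauchy--Schwarz to $\sum_i\sqrt{p_i}\cdot1$ gives $C_{\ell_1}\leq(\sum_i\sqrt{p_i})^2-1\leq n-1$. The bound $C_{\rel}\leq\log_2(1+C_{\rob})$ follows from $C_{\rel}(\rho',\P)\leq C_{\max}(\rho',\P)=\log_2[1+C_{\rob}(\rho',\P)]$ applied to $\rho'=\Ec[\rho]$, combined with Eqs.~\eqref{pbcmeas} and~\eqref{robpbc}; only the pointwise relation $S(\rho\|\sigma)\leq D_{\max}(\rho\|\sigma)$ is needed for the first step.

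The main obstacle is the pure-state saturation $C_{\rob}(\psi,\E)=C_{\ell_1}(\psi,\E)$, where the $\leq$ direction is already in hand. For the reverse I would move to the alternative Naimark form~\eqref{altpbcmeas}: the embedded state is pure, $\Ec_V[\psi]=\proj{\phi}$ with $\ket{\phi}=\sum_i\sqrt{p_i}\ket{\psi_i}\ot\ket{i}$ (writing $\ket{\psi_i}\defeq A_i\ket\psi/\sqrt{p_i}$), and the block projectors are $P_i=\1\ot\proj{i}$. I would dualize the SDP for $C_{\rob}(\rho',\P)$ to $\max_{W\geq0}\{\tr(W\rho')-1:\lVert\Delta[W]\rVert_\infty\leq1\}$; strong duality is secured by Slater's condition, since $\rho'<(1+s)\1/d'$ and $\1/d'\in\Delta[\Sc']$ for $s$ large enough. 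The witness $W=\proj{\phi'}$ with $\ket{\phi'}=\sum_i\ket{\psi_i}\ot\ket{i}$ satisfies $\Delta[W]=\sum_i\proj{\psi_i}\ot\proj{i}$ (hence $\lVert\Delta[W]\rVert_\infty=1$) and $\tr(W\proj\phi)=(\sum_i\sqrt{p_i})^2=1+C_{\ell_1}(\psi,\E)$, which delivers the matching lower bound and closes the argument.
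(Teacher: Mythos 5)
Your proof is correct, but it reaches the two nontrivial claims by a genuinely different route than the paper. For $C_{\rob}(\rho,\E)\leq C_{\ell_1}(\rho,\E)$ the paper works entirely in the \emph{dual} SDP $\max\{\tr(X\Ec_V[\rho])-1: X\geq0,\ X_{i,i}=\1\}$ and bounds each term $\lvert\tr(X_{j,i}A_i\rho A_j\ad)\rvert$ by $\trn{A_i\rho A_j\ad}$ via the variational characterization of the trace norm; you instead exhibit an explicit \emph{primal} feasible pair, assembling the noise state $\tau$ from SVD-based positive completions $T_{ij}=MM\ad$ of each off-diagonal block. Your route is constructive (it produces the mixing state achieving the bound) and, combined with your H\"older estimate $\trn{A_i\rho A_j\ad}\leq\sqrt{p_ip_j}$ plus Cauchy--Schwarz, it yields the sharper intermediate bound $C_{\ell_1}(\rho,\E)\leq(\sum_i\sqrt{p_i})^2-1\leq n-1$ for \emph{all} states; the paper instead proves $C_{\rob}\leq n-1$ separately (via complete positivity of $n\Delta-\id$) and argues $C_{\ell_1}\leq n-1$ only by evaluating the measure on a maximally block-coherent state, which tacitly assumes that state is the maximizer --- your argument closes that gap and subsumes the $C_{\rob}\leq n-1$ step. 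For the pure-state saturation, both proofs use essentially the same dual witness: your $W=\proj{\phi'}$ with $\ket{\phi'}=\sum_i\ket{\psi_i}\ot\ket{i}$ is the rank-one core of the paper's $\tilde X=\proj{\Omega}+\sum_i(\1-\proj{\phi_i})\ot\proj{i}$; the paper pads the diagonal so that $\Delta[\tilde X]=\1$ holds exactly, whereas you relax the dual constraint to $\Delta[W]\leq\1$ (a legitimate and easily justified relaxation), and note that only weak duality --- not the Slater argument you invoke --- is actually needed for the lower bound. The entropy inequality $C_{\rel}\leq\log_2[1+C_{\rob}]$ is the same argument in both versions, yours phrased as $S(\rho\Vert\sigma)\leq D_{\max}(\rho\Vert\sigma)$ and the paper's as an explicit computation using operator monotonicity of the logarithm.
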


\begin{proof} --
First, we prove $C_{\rob}(\rho,\E) \leq n-1$ by showing that $C_{\rob}(\rho',\P) \leq n-1$ for any $n$-outcome projective measurement $\P$ and any state $\rho'\in\Sc'$.
For that, define $K_{i,j}=(P_i-P_j)/\sqrt2$ and consider the expression
\begin{align}
\sum_{i,j} K_{i,j}\rho K_{i,j}\ad 
&= \frac12 \sum_{i,j}(P_i-P_j)\rho(P_i-P_j) \ncl
&= \sum_{i,j}P_i \rho P_i - \sum_{i,j}P_i \rho P_j \\ 
&= n\sum_iP_i \rho P_i - \sum_{i,j}P_i \rho P_j= (n\Delta-\id)[\rho]. \nonumber
\end{align}
Consequently, the map $(n\Delta-\id)$ admits a Kraus decomposition and is thus completely positive. This implies that $n\Delta[\rho']-\rho'\geq0$ holds for any quantum state $\rho'$. Hence, we obtained $\rho'\leq n\Delta[\rho']$ and by comparison with Eq.~\eqref{rob2} we conclude that $C_{\rob}(\rho',\P)=s\leq n-1$.

The relation $C_{\ell_1}(\rho,\E) \leq n-1$ can be shown by evaluating the underlying block-coherence measure for a maximally coherent state. The latter is given by $\ket{\Psi_{\operatorname{m}}}=\frac1{\sqrt n}\sum_i\ket{\varphi_i}$ with pure block-incoherent states $\ket{\varphi_i}$ defined in Eq.~\eqref{purebic}. This leads to
\begin{align}
C_{\ell_1}(\ket{\Psi_{\operatorname{m}}},\P) 
&= \frac1n\sum_{i\neq j}\trn{\sum_{k,l} P_i \ketbra{\varphi_k}{\varphi_l}P_j}
= \frac1n\sum_{i\neq j}\trn{\,\ketbra{\varphi_i}{\varphi_j}\,} \ncl 
&= \frac1n\sum_{i\neq j}1 = \frac1nn(n-1) = n-1.
\end{align}

In the Appendix~\ref{app:rob} we show a further SDP characterization of the robustness of POVM-based coherence. Moreover, this form is used to show that $C_{\rob}(\psi,\E)= C_{\ell_1}(\psi,\E)$ for pure states and $C_{\rob}(\rho,\E)\leq C_{\ell_1}(\rho,\E)$ in general.

Finally, we show Eq.~\eqref{robrel} similar to Ref.~\cite{rana2017logarithmic}. Let $s^*,\delta^*$ be the the optimal quantities for $C_{\rob}(\rho,\E)=C_{\rob}(\Ec[\rho],\P)$ in Eq.~\eqref{rob2}. Using the abbreviation $\rho_\Ect=\Ec[\rho]$, it holds that $C_{\rel}(\rho_\Ect,\P)=S(\rho_\Ect || \Delta[\rho_\Ect])\leq S(\rho_\Ect || \Delta[\delta^*])$. Moreover
\begin{align}\label{aux11}
&S(\rho_\Ect || \Delta[\delta^*]) \ncl
=&\tr\Bigl[\rho_\Ect\bigl(\log_2\rho_\Ect-\log_2\frac{(1+s^*)\Delta[\delta^*]}{(1+s^*)}\bigr)\Bigr] \\
=&\log_2(1+s^*) + \tr[\rho_\Ect(\log_2\rho_\Ect-\log_2(1+s^*)\Delta[\delta^*])], \nonumber
\end{align}
where we have used the definition of the relative entropy $S(\rho\rvert\lvert\sigma)=\tr[\rho(\log_2\rho-\log_2\sigma)]$.
On the other hand, Eq.~\eqref{rob2} implies that $\rho_\Ect\leq(1+s^*)\Delta[\delta^*]$. The latter relation together with the fact that the logarithm is operator-monotone yields that the second term in~\eqref{aux11} (last line) is non-positive. We conclude that $C_{\rel}(\rho,\E)\leq S(\rho_\Ect || \Delta[\delta^*])\leq\log_2(1+s^*)$ implying the desired relation.  
\end{proof}

\section{Conclusion and Outlook}
\label{sec:conclusion}

We presented several results on the resource-theoretical concept of coherence with respect to a general quantum measurement. We expect these advances to clarify the role of quantum coherence in information technologies employing nonprojective measurements. 
In particular, we discussed selected features of POVM-based coherence theory that are distinct from the standard resource theory of coherence. Moreover, we established a probabilistic framework of free transformations in conjunction with resource measures. 
This led to the introduction of new, strongly monotonic POVM-based coherence measures 
that generalize well-known coherence measures. We also established relations among the new measures. Finally, we showed that the relative-entropy-based resource measure is equal to the cryptographic randomness gain, providing an important operational meaning to the concept of coherence with respect to a measurement.

Together with Ref.~\cite{bischof2018resource}, we have paved the way for a detailed operational analysis of POVM-based coherence as a resource, akin to what has been achieved in the standard resource theory of coherence~\cite{winter2016operational,chitambar2016comparison,chitambar2016critical,
yadin2016quantum}. The operational analysis includes the investigation of resource distillation and dilution in the asymptotic and single-shot regime, see~\cite{zhao2018one,regula2018one,lami2019completing}. In particular, it is open whether our theory is reversible, or there are bound resources for a given class of POVM-incoherent operations~\cite{zhao2019one,lami2019generic}. An important step towards this goal would consist in a possible simplification of our constructions, e.g., of the MPI and PI operations. 
Moreover, we expect that virtually all known coherence measures and channel classes~\cite{streltsov2016quantum} can be generalized to POVMs. It is likely that more operational interpretations of POVM-based coherence measures can be found which link the resource theory to interesting applications in quantum information science. Finally, future work should address the connection of POVM-based coherence with other notions of nonclassicality such as entanglement and purity~\cite{streltsov2015measuring,streltsov2018maximal}.

\begin{acknowledgments}
We acknowledge financial support from the German Federal Ministry of Education and Research (BMBF).
F.B.\ gratefully acknowledges support from Evangelisches Studienwerk Villigst and from Strategischer Forschungsfonds of the Heinrich Heine University D\"usseldorf. 
\end{acknowledgments}

\noindent
\section*{Appendix}\label{app}
\appendix

\section{Relating Naimark extensions of a POVM}
\label{app:rne}

In the Supplemental Material of Ref.~\cite{bischof2018resource} several relations between Naimark extensions of a POVM were established. In this section, we provide an overview of these results which are used to show that the constituents of our POVM-based coherence theory do not depend on the choice of Naimark extension. In particular, we prove Prop.~\ref{prop:pickraus} at the end of this section.

Let $\P$, $\hat\P$ be two Naimark extensions of the same $n$-outcome POVM $\E$ such that $\rank \hat P_i\leq\rank P_i$. There exists an isometry $Q\colon\hat\H\to\H'$ from the smaller Naimark spacer to the larger Naimark space such that 
\begin{align}
&P_iQ = Q\hat P_i \quad\textrm{and} \label{a1} \\
&\Qc\circ\hat\Delta=\Delta\circ\Qc,
\end{align}
where we have defined the isometric channel $\Qc[X]=QXQ\ad$ and $\hat\Delta[X]=\sum_i\hat P_i X \hat P_i$ denotes the block-dephasing operator.

Moreover, it was shown that there exists a unitary $U$ on the larger Naimark space such that~\cite{bischof2018resource}
\begin{align}
&Q\Pie = U\Pie \quad\textrm{and} \label{a2} \\
&\Qc\circ\Ec = \Uc\circ\Ec,
\end{align}
where $\Ec[X]=TXT\ad=X\op0$ denotes the embedding operation, see Sec.~\ref{sec:pbcoh}. This unitary can be chosen to be block-diagonal such that it commutes with the Naimark extension effects
\begin{align}
&UP_i=P_iU \quad\textrm{and} \label{a3} \\
&\Delta\circ\Uc = \Uc\circ\Delta.
\end{align}

The channel $\Qc\ad[\rho]=Q\ad\rho Q$ is completely positive but not trace-preserving in general. Define the projector $S\defeq QQ\ad$ and its complement $S^\perp = \1-S$ for which holds that $S^\perp Q=0$.
We define the completely positive map 
\begin{align}
\Tc[\rho]\defeq\tr(S^\perp\rho)\1/d_{\min},
\end{align}
which has Kraus operators
\begin{align}\label{lab}
L_{\hat a,b}=\frac1{\sqrt{ d_{\min}}}\ketbra{\hat a}{b}S^\perp, 
\end{align}
where $\{\ket{\hat a}\}$ ($\{\ket{b}\}$) denotes an orthonormal basis of the smaller (larger) Naimark space.
We choose as output basis $\ket{\hat a}\in\hat\H$ an incoherent basis with respect to $\hat P_i$. Consequently, $L_{\hat a,b}$ cannot create coherence for any input.
Define the operators
\begin{align}\label{rm}
R_{m}=
\begin{cases}
Q\ad & \textrm{for } m=0 \\
L_{\hat a,b} & \textrm{for } m\geq1,
\end{cases} 
\end{align}
where the index $m$ for $m\geq1$ runs over all combinations of $(\hat a,b)$.
The set $\{R_m\}$ is a set of Kraus operators for the channel 
\begin{align}
\Rc=\Qc\ad+\Tc.
\end{align}
It holds that $\Rc\circ\Qc=\id$, i.e., $\Rc$ is a reversal channel of the isometric channel $\Qc$. One can show that the following equation holds~\cite{bischof2018resource}
\begin{align}
\hat\Delta\circ\Rc = \Rc\circ\Delta.
\end{align}
In addition, it holds that $\Tc\circ\Ec[\rho]=\tr(S^\perp\Ec[\rho])\1/d_{\min}=0$ and therefore
\begin{align}
\Rc\circ\Ec = \Qc\ad\circ\Ec.
\end{align}

Finally, we define the following channel from operators on the larger Naimark space to operators on the smaller Naimark space: 
\begin{align}
\Nc\defeq\Rc\circ\Uc,\quad \textrm{which satisfies} \\
\Nc\circ\Ec=\Ec\quad\textrm{and}\quad \Nc\circ\Delta=\hat\Delta\circ\Nc.
\end{align}
The first equality follows from $\Nc\circ\Ec=\Rc\circ\Uc\circ\Ec=\Rc\circ\Qc\circ\Ec=\Ec$. The second equality follows from $\Nc\circ\Delta=\Rc\circ\Uc\circ\Delta=\hat\Delta\circ\Rc\circ\Uc= \hat\Delta\circ\Nc$.

\subsection*{Proof of Proposition~\ref{prop:pickraus}}

\noindent\textbf{Proposition 2.}\ \ 
The set containing all POVM-incoherent (PI) Kraus operators $K_l$ does not depend on the choice of Naimark extension used to define it, see Eq.~\eqref{pickraus}.

\begin{proof} --
Let $\P$, $\hat\P$ be two Naimark extensions of the same POVM $\E$ such that $\rank \hat P_i\leq\rank P_i$. Let $\{K_l=T\ad K_l' T\}$ be the set of POVM-incoherent Kraus operators defined via incoherent operators $\{K_l'\}$ of the ``larger'' Naimark extension $\P$, see Eq.~\eqref{pickraus}. Consider the MBI channel $\Gamma[\rho']=\sum_{l}K_l'\rho' (K_l')\ad$ on the larger Naimark space. The channel $\hat\Gamma\defeq\Rc\circ\Uc\circ\Gamma\circ\Uc\ad\circ\Qc$ is a MBI channel on the smaller Naimark space, that leads to the same (local) MPI operation $\Lambda_\mpi$~\cite{bischof2018resource}. We consider the following Kraus decomposition of the channel:
\begin{align}
\hat\Gamma[\hat\rho]
&=\sum_{m,l}R_mUK'_lU\ad Q\hat\rho Q\ad U(K'_l)\ad U\ad R_m\ad \ncl
&=\sum_{m,l}\hat K_{m,l}\hat\rho \hat K_{m,l}\ad, \ncl 
\hat K_{m,l} &\defeq R_mUK'_lU\ad Q, \label{kml}
\end{align}
where $R_m$ was defined in Eq.~\eqref{rm}. 

We proceed to show that the set $\{\hat K_{m,l}\}$ 
\begin{enumerate}[nosep]
\item[i)] satisfies $\sum_{m,l}\hat K_{m,l}\ad\hat K_{m,l}=1$,
\item[ii)] has the property that each element is incoherent w.r.t. $\hat\P$,
\item[iii)] leads to the previous set of PI Kraus operators, more precisely, $T\ad \hat K_{m,l} T = \delta_{m,0}K_l$.
\end{enumerate}

The first claim holds since $\{\hat K_{m,l}\}$ is a set of Kraus operators of $\hat\Gamma$, which is a completely positive trace-preserving map~\cite{bischof2018resource}.

For the second claim, consider a block-incoherent pure state $\ket{\varphi_i}=\hat P_i \ket{\varphi_i}$, for which holds:
\begin{align}\label{a4}
\hat K_{m,l}\ket{\varphi_i} &= \hat K_{m,l} \hat P_i \ket{\varphi_i} \ncl
&= R_mUK'_l P_i U\ad Q \ket{\varphi_i} \ncl
&= R_mUP_{f(i)}K'_l P_i U\ad Q \ket{\varphi_i} \ncl
&= R_mP_{f(i)}UK'_l P_i U\ad Q \ket{\varphi_i} \ncl
&= \begin{cases}
\hat P_{f(i)}Q\ad UK'_l P_i U\ad Q \ket{\varphi_i} & \textrm{for } m=0 \\
L_{\hat a,b}P_{j(i)}UK'_l P_i U\ad Q \ket{\varphi_i} & \textrm{else.}
\end{cases}
\end{align}
The second equation makes use of~\eqref{a1} and~\eqref{a3}. In the third line we have used that for an incoherent input, the output of $K_l'$ is incoherent~\eqref{aux10}. Finally, the last equation follows from the definition of $R_m$~\eqref{rm}.
Note that in any case, the output of the Kraus operator in~\eqref{a4} is incoherent, see~\eqref{lab}.

For the third claim, we evaluate:
\begin{align}
T\ad \hat K_{m,l} T &= T\ad R_mUK'_lU\ad Q T \ncl
&= T\ad R_mUK'_l T \ncl
&= T\ad R_mU\Pie K'_l T \ncl
&= T\ad R_mQ\Pie K'_l T \ncl
&= \delta_{m,0} T\ad K'_lT = \delta_{m,0}K_l.
\end{align}
In the first line, the definition of $\hat K_{m,l}$~\eqref{kml} was inserted.
The second and fourth line utilize the relations $UT=QT$ and $\Pie=TT\ad$. In the third line, we have used that $K_l'$ is subspace-preserving~\eqref{spkraus}.
Finally, for the last line, note that according to~\eqref{rm}, $R_0Q=Q\ad Q=\1$, and $R_mQ=0$ for $m\geq1$.
\end{proof}

\section{Alternative SDP for generalized robustness measure}
\label{app:rob}

In Ref.~\cite{piani2016robustness} it was shown that the robustness of block-coherence (asymmetry) can be expressed by the following SDP:
\begin{align}
&C_{\rob}(\rho,\P) = \max\tr(X\rho)-1, \ncl
&\textrm{s.t.}\quad X\geq0,\quad \Delta[X]=\1.
\end{align}
where $\Delta[X]=\sum_iP_iXP_i$ denotes the block-dephasing operation. Consider the POVM-coherence measure $C_{\rob}(\rho,\E)=C_{\rob}(\Ec_V[\rho],\P)$, where $P_i=\1\ot\proj{i}$ and $\Ec_V[\rho]= \sum_{i,j}A_i\rho A_j\ad\ot\ketbra{i}{j}$, see Eq.~\eqref{altpbcmeas}. If we write $X=\sum_{i,j}X_{i,j}\ot\ketbra{i}{j}$, we directly obtain the SDP: 
\begin{align}
&C_{\rob}(\rho,\E)=\max\tr \bigl(\sum_{i,j}X_{j,i}A_i\rho A_j\ad\bigr)-1 \ncl
&\textrm{s.t.}\quad \sum_{i,j}X_{i,j}\ot\ketbra{i}{j}\geq 0,\quad X_{i,i}=\1.
\end{align}

Employing this form, we are able to show that $C_{\rob}(\rho,\E)\leq C_{\ell_1}(\rho,\E)$ as follows:
\begin{align}\label{robrelproof}
C_{\rob}(\rho,\E) &= \max_{X\geq0,X_{i,i}=\1} \sum_{i,j}\tr(X_{j,i}A_i\rho A_j\ad)-1 \ncl
&= \max_{X\geq0,X_{i,i}=\1} \sum_{i\neq j}\tr(X_{j,i}A_i\rho A_j\ad) \ncl
&= \max_{X\geq0,X_{i,i}=\1} 2\sum_{i<j}\operatorname{Re}\tr(X_{j,i}A_i\rho A_j\ad) \ncl
&\leq \max_{X\geq0,X_{i,i}=\1} 2\sum_{i<j}\lvert\tr(X_{j,i}A_i\rho A_j\ad)\rvert \ncl
&\leq 2\sum_{i<j}\max_{\lVert X_{i,j}\rVert_\infty\leq1}\lvert\tr(X_{j,i}A_i\rho A_j\ad)\rvert \ncl
&= 2\sum_{i<j}\trn{A_i\rho A_j\ad} = C_{\ell_1}(\rho,\E).
\end{align}
For the second inequality, we have used that $X\geq0, X_{i,i}=1$ implies $\lvert\lvert X_{i,j}\rvert\rvert_\infty\leq\1$, where $\lvert\lvert X \rvert\rvert_\infty$ denotes the largest singular value of $X$. Then, we employed the variational characterization of the trace norm, $\trn{R}=\max_{\lvert\lvert L\rvert\rvert_\infty\leq1}\modu{\tr(L\ad R)}$, which follows from the duality property of the Schatten norms~\cite{watrous2018theory}.

We proceed that show that $C_{\rob}(\psi,\E) = C_{\ell_1}(\psi,\E)$ holds for any pure state $\psi\defeq\proj{\psi}$. For indices $i,j$, consider the rank one operator $A_i\proj{\psi} A_j\ad = \sqrt{p_ip_j}\ketbra{\phi_i}{\phi_j}$ with $p_i\defeq\bra{\psi}A_i\ad A_i\ket{\psi}\leq1$. The vectors $\ket{\phi_i}=\frac1{\sqrt{p_i}}A_i\ket{\psi}$ are normalized and not necessarily orthogonal. Evaluating $C_{\ell_1}(\psi,\E)$ yields
\begin{align}\label{l1pure}
C_{\ell_1}(\psi,\E) &= \sum_{i\neq j}\trn{A_i\proj{\psi} A_j\ad} \ncl
&= \sum_{i\neq j}\sqrt{p_ip_j}\trn{\,\ketbra{\phi_i}{\phi_j}\,} \ncl
&= \sum_{i\neq j}\sqrt{p_ip_j}.
\end{align}
We define the hermitian operator $\tilde X=\sum_{i,j}\tilde X_{i,j}\ot\ketbra{i}{j}$ as
\begin{align}\label{defx}
\tilde X=\sum_{i,j}\ketbra{\phi_i}{\phi_j}\ot\ketbra{i}{j}+\sum_i(\1-\proj{\phi_i})\ot\proj{i}.
\end{align}
It holds that $\tilde X\geq0$ since the first term can be written as $\proj{\Omega}\geq0$ with $\ket{\Omega}=\sum_i\ket{\phi_i}\ot\ket{i}$, while the second term is in spectral decomposition form and apparently positive semidefinite. Moreover, the diagonal blocks of $\tilde X$ are equal to the identity, $\tilde X_{i,i}=\1$. Thus, $\tilde X$ is element of the feasible set of operators $X$ used to obtain $C_{\rob}(\psi,\E) = \max_{X\geq0,X_{i,i}=\1} \sum_{i\neq j}\tr(X_{j,i}A_i\proj{\psi} A_j\ad)$. Hence, it follows that
\begin{align}\label{robpure}
C_{\rob}(\psi,\E) &\geq \sum_{i\neq j}\tr(\tilde X_{j,i}A_i\proj{\psi} A_j\ad) \ncl
&= \sum_{i\neq j}\sqrt{p_ip_j}\tr(\ketbra{\phi_j}{\phi_i}\, \ketbra{\phi_i}{\phi_j}) \ncl
&= \sum_{i\neq j}\sqrt{p_ip_j}.
\end{align}
By comparing~\eqref{l1pure} and~\eqref{robpure}, we infer that $C_{\rob}(\psi,\E)\geq C_{\ell_1}(\psi,\E)$ holds for any pure state $\psi$. Combining this with the inequality $C_{\rob}(\rho,\E)\leq C_{\ell_1}(\rho,\E)$ for general states $\rho$, we conclude that $C_{\rob}=C_{\ell_1}$ holds for pure states and any POVM. \qed

\bibliography{C:/Users/FB/Desktop/Doktorarbeit/bibqi}
\bibliographystyle{apsrev}

\end{document}